\newtheorem{fact}{Fact}[section]
\newtheorem{definition}{Definition}[section]
\newtheorem{lemma}{Lemma}[section]
\newtheorem{corollary}{Corollary}[section]
\newtheorem{theorem}{Theorem}[section]
\newcommand*{\eps}{\epsilon}
\newcommand{\vcent}[1]{\raisebox{1.5ex}[-1.5ex]{#1}}
\newcommand{\vcentmath}[1]{\raisebox{1.0ex}[-1.0ex]{#1}}
\newcommand*{\bs}{\backslash}
\def\polylog{\operatorname{polylog}}
\title{\textbf{Approximating Subdense Instances\\ of Covering Problems
}}
\author{
  Jean Cardinal\thanks{Université Libre de Bruxelles (ULB), CP212. \texttt{Email:~jcardin@ulb.ac.be}}\\[.75ex]
  Richard Schmied\thanks{Dept. of Computer Science, University of Bonn.
    Work supported by Hausdorff Doctoral Fellowship.
    Research partially done while visiting Princeton University.
    \texttt{Email:~schmied@cs.uni- bonn.de}}
  \and
  Marek Karpinski\thanks{Dept. of Computer Science and the Hausdorff
    Center for Mathematics, University of Bonn.
    Supported in part by DFG grants and the Hausdorff Center grant EXC59-1.
    \texttt{Email:~marek@cs.uni- bonn.de}}\\[.75ex]
  Claus Viehmann\thanks{Dept. of Computer Science, University of Bonn.
    Work partially supported by Hausdorff Center for Mathematics, Bonn.
    Research partially done while visiting Princeton University.
    \texttt{Email: viehmann@cs.uni-bonn.de}}
}
\date{}
\begin{document}
\maketitle
\begin{abstract}
We study approximability of \emph{subdense} instances of various
covering problems on graphs, defined as instances in which the minimum or average degree
is $\Omega (n/\psi (n))$ for some function $\psi(n) =\omega (1)$ of the instance size. 
We design new approximation algorithms as well as new polynomial time approximation schemes (PTASs)
for those problems and establish first approximation hardness results for them.
Interestingly, in some cases we were able to prove optimality of the
underlying approximation ratios, under usual complexity-theoretic assumptions.
Our results for the Vertex Cover problem depend on an improved recursive sampling method
which could be of independent interest.\\

\noindent\textbf{Keywords:} Approximation Algorithms, Approximation Schemes, Set Cover,
Vertex Cover, Connected Vertex Cover, Steiner Tree, Subdense Instances,
Approximation Lower Bounds
\end{abstract}


\section{Introduction}

In order to cope with the intractability of combinatorial optimization problems, it is often useful
to consider specific families of instances, the properties of which can be exploited to obtain polynomial-time
algorithms with improved approximation guarantees. {\em Dense} instances of optimization problems on graphs, 
in which the minimum or the average degree of the input graph is high, are an example of such a family. 
Typically, an instance of a combinatorial optimization on a graph is said to be dense whenever 
the minimum or average degree of the graph is $\Omega (n)$, where $n$ is the number of vertices.
It was proven in Arora, Karger and Karpinski~\cite{AKK95} that 
a wide range of maximization problems restricted to dense instances admit 
polynomial time approximation schemes (PTAS). Later, Karpinski and Zelikovsky~\cite{KZ97} 
defined and studied dense cases of covering problems. In particular,
they investigated the approximation complexity of dense versions of Set Cover,
Vertex Cover and the Steiner Tree problem. Density parameters 
have been used in approximation ratios for various optimization problems
(see \cite{K01} for a detailed survey and \cite{KZ97,II05,CL10} for the Vertex Cover problem).\\

In this paper, we aim at bridging the gap between dense and sparse instances, by considering
wider ranges for the density parameters. We define families of {\em subdense} instances, in which
the minimum (or average) degree of the underlying graph is $o(n)$, but $\Omega (n/\psi (n))$ for some
function $\psi$ of the instance size $n$. We consider subdense cases of covering and Steiner Tree problems, 
for which previous results have been obtained in dense cases.
We now describe precisely the problems and our contributions.

\subsection{Problems and Previous Results}

\paragraph{Vertex Cover.}

The \emph{Vertex Cover} (VC) \emph{problem} asks to find for a given graph
$G=(V,E)$ a minimum size vertex set $C\subseteq V$ which covers the edges of $G$.
We call a graph $G=(V,E)$ ($n=|V|$) \emph{everywhere $\eps$-dense} for $\eps>0$ if all
vertices in $G$ have at least $\eps n$ neighbors, and 
{\em average $\eps$-dense} if the average degree of a vertex in $G$ is at least
$\eps n$ (that is, the number of edges is a constant fraction of that of the complete graph).
Currently, the best parameterized ratios for the Vertex Cover problem for average and
everywhere dense instances are $2/(2-\sqrt{1-\eps})$ and $2/(1+\eps)$, respectively \cite{KZ97}.
Imamura and Iwama~\cite{II05} later improved the former result, by generalizing it to
depend on both average degree $\bar{d}$ and the maximum degree $\Delta$.
In particular, they gave an approximation algorithm with an approximation
factor smaller than 2, provided that the ratio $\bar{d}/\Delta > c$ for some constant $c<1$, 
and $\Delta=\Omega (n \frac{\log \log n}{\log n})$. 

As for lower bounds, Clementi and Trevisan~\cite{CT99} as well as Karpinski and Zelikovsky~\cite{KZ97} 
proved that the Vertex Cover problem restricted to everywhere and average dense graphs remains APX-hard.
Later, Eremeev~\cite{E99} showed that it is NP-hard to approximate the Vertex Cover problem
in everywhere $\eps$-dense graphs within a factor less than $(7+\eps)/(6+2\eps)$.
Finally, Bar-Yehuda and Kehat \cite{BK04} prove that if the Vertex Cover problem cannot be
approximated within a factor strictly smaller than 2 on arbitrary graphs, then it cannot
be approximated within factors smaller than $2/(2-\sqrt{1-\eps})-o(1)$ and
$2/(1+\eps)-o(1)$, respectively, on average and everywhere dense graphs.
Imamura and Iwama~\cite{II05} later proved that their obtained approximation bound is best possible 
unless the general vertex cover problem can be approximated within $2-\epsilon$ for some $\epsilon > 0$.\\

\paragraph{Connected Vertex Cover.}

The \emph{Connected Vertex Cover} (CVC) \emph{problem} is a variant of the VC problem,
the goal of which is to cover the edges of a given connected graph with a minimum size vertex
set that induce a connected subgraph. Cardinal and Levy~\cite{CL10} gave 
the first parameterized approximation ratios for the CVC problem. The achieved ratios are $2/(2-\sqrt{1-\eps})$
for the average dense case and $2/(1+\eps)$ for the everywhere dense case.
Assuming the Unique Game Conjecture~\cite{KR08}, the reduction in \cite{BK04} implies that 
the achieved ratios are the best possible. Cardinal and Levy conjectured
that similar results due to Imamura and Iwama can be obtained in subdense cases
and left it for future work.\\

\paragraph{Set Cover.}

A Set Cover instance consists of a ground set $X$ of size $n$ and a collection $P\subset 2^X$ 
of $m$ subsets of $X$. The goal is to find a minimum size subset of $P$ that covers $X$, that is, a minimum 
cardinality subset $M\subseteq P$ such that $\bigcup_{x\in M}x = X$. 
The simple greedy heuristic for the Set Cover problem was already 
studied by Johnson~\cite{J74} and Lov{\'a}sz~\cite{L75}. They showed independently 
that this approximation algorithm provides a $(1+\ln(n))$-approximate
solution which is essentially the best achievable under reasonable complexity-theoretic
assumptions \cite{F98}. 

Bar-Yehuda and Kehat~\cite{BK04} studied dense versions of the Set Cover problem. Their definition of a dense instance
generalizes that of the dense Vertex Cover problem; they study $k$-bounded instances, in which every element of $X$
appears in at most $k$ subsets of $P$ (for the Vertex Cover problem, we have $k=2$), but where $n \geq \eps m^k$. 
They generalized the results of Karpinski and Zelikovsky~\cite{KZ97} to this setting.

On the other hand, Karpinski and Zelikovsky~\cite{KZ97} defined $\epsilon$-dense instances of the Set Cover problem
as instances in which every element of $X$ belongs to a fraction at least $\epsilon$ of the subsets of $P$.
They showed that such dense instances can be approximated within $c\ln(n)$ for every $c>0$
in polynomial time and can be solved to optimality in time $m^{O(\ln(n))}$.\\

\paragraph{Steiner Tree.}

In the \emph{Steiner Tree problem} (STP), we are given a connected graph $G=(V,E)$
and a set of distinguished vertices $S\subseteq V$, called \emph{terminals}.
We want to find a minimum size tree within $G$ that spans all terminals from $S$.
The Steiner Tree problem is one of the first problems shown to be NP-hard by Karp~\cite{K75}.
A long sequence of papers give approximation factors better than 2
\cite{Z93,BR92,KZ97a,HP99,RZ05,BGRS10}.
The current best approximation ratio is 1.39 due to Byrka et al.~\cite{BGRS10}.
Since the problem is APX-hard \cite{BP89}, we do not expect to
obtain a PTAS for this problem in arbitrary graphs.
In particular, it is NP-hard to find solutions of cost less than $\frac{96}{95}$ time
of the optimal cost~\cite{CC08}.

Karpinski and Zelikovsky~\cite{KZ97} introduced the $\eps$-dense Steiner Tree problem. 
A Steiner Tree is called $\eps$-dense if each terminal $t\in S$ has at least
$\eps |V\bs S|$ neighbors in $V\bs S$. They showed that for every $\eps>0$ the $\eps$-dense Steiner Tree problem admits
a PTAS. Later on, Hauptmann~\cite{H07} showed that the same scheme even yields an efficient PTAS for the
$\eps$-dense Steiner Tree problem.\\

\subsection{Our Contributions}

In this work, we consider a natural extension of the $\eps$-density: an instance is called $\psi(n)$-dense, if 
the minimum (or average) degree is at least $\frac{n}{\psi (n)}$.
In the remainder, we require $\psi(n)$ to be $\omega(1)$ and $n$ is a size parameter 
of the instance. This type of instances are called \emph{nondense} and were considered before for the MAX-CUT 
problem by 
Fernandez de la Vega and Karpinski~\cite{FK06}.\\

In Section~\ref{sec:vc}, we show how to modify the sampling procedure of the Imamura-Iwama algorithm (II-algorithm) \cite{II05})  
so that the degree condition can be slightly relaxed. This yields the same approximation factors in subdense instances where
$\psi (n)$ is slightly sublogarithmic. More precisely, given a graph $G$ with average degree $\bar{d}$, maximum degree $\Delta\leq \frac{n}{2}$ and 
$\Delta =\Omega (n / \psi(n))$, we can compute with high probability a solution with approximation ratio 
$\frac{2}{1+\bar{d}/(2\Delta)}$ in $n^{O(1)} 2^{O(\psi(n)\log\log\log(n))}$ time which is polynomial
if $\psi (n) = O(\log(n)/\log \log \log(n))$.  
From the previously known hardness result, the approximation ratio is best possible unless Vertex Cover can be 
approximated within $2-\eps$ for some $\eps > 0$.\\

In Section~\ref{sec:cvc}, we introduce the Subset Connected Vertex Cover (SCVC) problem which
generalizes the CVC problem and present a constant-factor approximation algorithm for the problem. 
This algorithm relies on a previously known approximation algorithm for quasi-bipartite instances of the Steiner Tree problem.
By combining the approximation algorithm for the SCVC problem with the modified II-algorithm, 
we prove the existence of a randomized approximation algorithm with the same approximation factor and running time
as for Vertex Cover in nondense instances. In particular, given a graph $G$ with average degree $\bar{d}$, maximum degree $\Delta\leq \frac{n}{2}$ and 
$\Delta =\Omega (n / \psi(n))$, we can compute with high probability a connected vertex cover with approximation ratio 
$\frac{2}{1+\bar{d}/(2\Delta)}$ in $n^{O(1)} 2^{O(\psi(n)\log\log\log(n))}$ time 
which is polynomial
if $\psi (n) = O(\log(n)/\log \log \log(n))$ and quasipolynomial
if $\psi (n) = \polylog(n)$. \\ 

In Section~\ref{sec:sc},
we study $\psi (n)$-dense instances of the Set Cover problem, in which
every element of the ground set $X$ is contained in at least $|P|/\psi (n)$ subsets of the family $P$. 
In particular, we analyze the performance of the greedy heuristic and show that
the greedy algorithm for $\psi (n)$-dense instances of the Set Cover problem returns a solution of size at most $\psi (n) \ln n$. Consequently, it follows that $\psi (n)$-dense instances 
can be solved exactly in time $O(|P|^{\ln(n)\psi(n)})$ 
and unless $NP\subseteq DTIME [n^{\log n \cdot \psi(n)}]$, the $\psi(n)$-dense
Set Cover problem is not NP-complete.
This algorithm is used as a subroutine for the approximation algorithm for the
Steiner Tree problem.\\

The subdense Steiner Tree problem is tackled in Section~\ref{sec:st}. 
An instance of the Steiner Tree problem is said to be $\psi(n)$-dense when
every terminal in $S$ is connected to at least $|V\setminus S|/\psi (n)$ nonterminals. 
We prove that these instances can be approximated within $1+\delta$ in time $n^{O(1)}2^{O(\frac{\psi(n)}{\delta})}$. 
In particular, it yields a PTAS for the cases where $\psi(n)=O(\log n)$, and a QPTAS for the cases where $\psi (n) = \polylog n$. 
On the negative side, we show that for every $\delta,\eps>0$, the Steiner Tree problem restricted to 
$|V\setminus S|^{1-\delta}$-everywhere dense graphs is NP-hard to approximate with $\frac{263}{262}-\eps$. 
Our results are listed in the Tables~\ref{tab:over:upp} and~\ref{tab:over:low}.  
(We assume $\psi(n)\geq 2$.)


\begin{samepage}
\begin{table}[h]
\centering
\begin{tabular}{@{}||c|c|c|c||@{}}
  \hline \hline
  \multicolumn{4}{||c||} {\textbf{Upper Bounds}} \\
  \hline
  \hline
  \textbf{VC} & \textbf{CVC} & \textbf{SC} & \textbf{STP} \\
  \hline
  \hline
    \multicolumn{4}{||c||} {Subdense: $\psi(n)=O(\log n )$} \\
 \hline
 \hline
 & & & \\
  \vcent{$\psi(n)\!=\!O(\frac{\ln n}{\ln\ln\ln n})$} &
  \vcent{$\psi(n)= O (\frac{\ln n}{\ln\ln\ln n})$} & 
  \vcent{$\psi(n) = O(\log n)$} &
  \vcent{$\psi(n) = O(\log n)$} \\ 
%
\hline
 & & & \\
  \vcent{$\frac{2}{1+\frac{\bar{d}}{2\Delta}}$-approx.} &
  \vcent{$\frac{2}{1+\frac{\bar{d}}{2\Delta}}$-approx.} &
  \vcent{exact}  &
  \vcent{PTAS}
  \\
\hline
  poly & 
  poly &
  $O\left(m^{O(\log^2 n)}\right)$ &
  \\
  time &
  time &
  time &
  \vcent{$-$} \\
  \hline
  \hline
  \multicolumn{4}{||c||} {Mildly Sparse: $\psi(n)=\polylog n$} \\
  \hline
  \hline
  & & & \\
  \vcent{same as in \cite{II05}}  & 
  \vcent{$\frac{2}{1+\frac{\bar{d}}{2\Delta}}$-approx.} &
  \vcent{exact} &
  \vcent{QPTAS} \\
\hline
   &
  quasipoly  &
  $O(m^{\polylog n})$  &
  \\
  \vcent{$-$} &
  time &
  time &
  \vcent{$-$}  \\
  \hline
  \hline
  \multicolumn{4}{||c||} {Nondense: $\psi(n)=\omega(1)$} \\
\hline
\hline
  & & & \\
  \vcent{$\frac{2}{1+\frac{\bar{d}}{2\Delta}}$-approx.} & 
  \vcent{$\frac{2}{1+\frac{\bar{d}}{2\Delta}}$-approx.} &
  \vcent{exact} &
  \vcent{$(1+\delta)$-approx.} \\
\hline
  $n^{O(1)} 2^{O(\psi(n)\ln\ln\ln n)}$ &
  $n^{O(1)} 2^{O(\psi(n)\ln\ln\ln n)}$ &
  $O(m^{\psi(n)\ln n})$  &
  $n^{O(1)}2^{O(\frac{\psi(n)}{\delta})}$
  \\
  time & time & time & time \\
  \hline \hline
\end{tabular}
\caption{Overview Upper Bounds\label{tab:over:upp}}
\end{table}

\begin{table}[h]
\centering
\begin{tabular}{||p{.212 \linewidth}|p{.212 \linewidth}|p{.212 \linewidth}|c||} 
  \hline \hline 
 \multicolumn{4}{||c||} {\textbf{Lower Bounds}} \\
  \hline
  \hline
  \centering\textbf{VC} & \centering\textbf{CVC} & \centering\textbf{SC} & \textbf{STP} \\
  \hline
  \hline
   \centering $\forall \delta>0$&
   \centering $\forall \delta>0$&
   &
    $\forall \delta>0$
   \\
  \centering $\psi(n)=O(n^{1-\delta})$& 
  \centering $\psi(n)=O(n^{1-\delta})$&
  \centering \vcent{$-$}&
  $\psi(n)=|V\bs S|^{1-\delta}$
  \\
  \hline
  \centering$\forall \epsilon>0$
  & \centering $\forall \epsilon>0$&
  &$\forall \epsilon>0$
  \\
    \centering UGC-hard with
  & \centering UGC-hard with 
  & 
  &  NP-hard with
  \\
  \centering $\frac{2}{1+\frac{\bar{d}}{2\Delta}}-\epsilon$&
  \centering $\frac{2}{1+\frac{\bar{d}}{2\Delta}}-\epsilon$&
  \centering \vcent{$-$}&
  $\frac{263}{262}-\eps$
  \\
  \centering (cf. \cite{II05})& \centering (cf. \cite{II05})&& $ $\hspace{.2 \linewidth} $ $\\
  \hline \hline
\end{tabular}
\caption{Overview Lower Bounds\label{tab:over:low}}
\end{table}
\end{samepage}


\section{The Subdense Vertex Cover Problem}
\label{sec:vc}

Previous hardness results in Imamura and Iwama~\cite{II05} and Karpinski and Zelikovsky~\cite{KZ97} imply that approximating subdense instances -- in which either the average or the minimum degree is $o(n)$ -- within a ratio strictly smaller than 2 is as hard as approximating the general minimum Vertex Cover problem within a ratio smaller than 2. 

However, if we make the hypothesis that the average and maximum degree, denoted respectively by $\bar{d}$ and $\Delta$, are of the same order, we can obtain better approximation results in the subdense cases. Imamura and Iwama~\cite{II05} gave an approximation algorithm with an approximation factor smaller than 2 for vertex cover, provided that the ratio $\bar{d}/\Delta > c$ for some constant $c<1$, and both are $\Omega (n \frac{\log \log n}{\log n})$. They show that the obtained approximation bound is best possible unless the general Vertex Cover problem can be approximated within $2-\epsilon$ for some $\epsilon > 0$. In what follows, we show how to modify the sampling procedure of the II-algorithm so that the degree condition can be slightly relaxed. Before that, we first outline their algorithm and state their main results.

\paragraph{The Imamura-Iwama Algorithm.}

The algorithm is a refinement of the original Karpinski-Zelikovsky method, in which a large subset $W$ of a minimum vertex cover is computed, then a 2-approximation is computed on the graph induced by the remaining uncovered edges. In general, it is sufficient to compute a subset $W$ that has a large intersection with a minimum vertex cover. The following lemma bounds the approximation ratio obtained in this way as a function of the number of vertices of $W$ that are contained in an optimal solution $C$.

\begin{lemma}[\cite{II05}]
\label{lem:subset}
Let $W$ be a subset of $V$. If there exists a minimum vertex cover $C$ such that $|W\cap C| = n_1$ and $|W\setminus C|=n_2$, then we can construct in polynomial time a vertex cover $C'$ such that $|C'| \leq \frac{2}{1+\frac{n_1 - n_2}{n}} |C|$. 
\end{lemma}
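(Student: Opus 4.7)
The plan is a direct adaptation of the Karpinski--Zelikovsky two-phase template. The algorithm I would propose computes in polynomial time a $2$-approximate vertex cover $C''$ of the induced subgraph $G[V\setminus W]$ (e.g., by taking both endpoints of a maximal matching) and sets $C':=W\cup C''$; in parallel, I would also compute an ordinary $2$-approximation of $G$ itself and keep whichever of the two outputs is smaller. Since any polynomial $2$-approximation for vertex cover suffices, the whole procedure runs in polynomial time.

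Correctness of $W\cup C''$ is immediate: every edge of $G$ either has an endpoint in $W$ or lies entirely inside $V\setminus W$, and in the latter case it is covered by $C''$. For the size, note that $C\setminus W$ is a vertex cover of $G[V\setminus W]$ of cardinality $|C|-n_1$, so the $2$-approximation guarantee gives $|C''|\leq 2(|C|-n_1)$; trivially, also $|C''|\leq |V\setminus W|=n-n_1-n_2$. Combining,
\[
|W\cup C''|\ \leq\ \min\bigl(|W|+2(|C|-n_1),\ n\bigr)\ =\ \min\bigl(2|C|-(n_1-n_2),\ n\bigr).
\]

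The remaining task is purely algebraic: show that this minimum is at most $\frac{2|C|}{1+(n_1-n_2)/n}$. If $n_1\leq n_2$, the target ratio is already at least $2$, so the bound follows from the ordinary $2$-approximation on $G$. If $n_1>n_2$, I would split on whether $2|C|\leq n+(n_1-n_2)$: in this case the first quantity in the minimum is the binding one and, after cross-multiplication, the desired inequality reduces to $(n_1-n_2)\bigl(2|C|-n-(n_1-n_2)\bigr)\leq 0$, which holds since the two factors have opposite signs; in the complementary case $2|C|>n+(n_1-n_2)$ one uses the trivial bound $|C'|\leq n$ together with this same inequality, rearranged, to conclude. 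The main obstacle is nothing deep combinatorially; it is precisely in this small case split. The proof can only fail if either of the two bounds on $|C''|$ is discarded, so the careful point is to keep both of them available so that the sharper one is invoked in the appropriate regime.
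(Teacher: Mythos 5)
Your proposal is correct and follows essentially the same two-phase construction that underlies this lemma: add $W$, run a $2$-approximation on the graph induced by the remaining vertices, and combine the bound $|C'|\leq 2|C|-(n_1-n_2)$ with the trivial bound $|C'|\leq n$ (equivalently, the plain $2$-approximation when $n_1\leq n_2$). The paper itself does not reprove the lemma but cites it from Imamura and Iwama~\cite{II05}, whose argument is the one you reconstructed, so no further comparison is needed.
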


The II-algorithm uses both randomization and recursion to construct a suitable subset $W$. 
It is given in Figure~\ref{fig:I-I}, where:
$$
r(G) := |V|\left( 1-\sqrt{1-\frac{\bar{d}}{|V|}}\right) .
$$
The initial call to the algorithm is made with $i=1$ and a well-chosen value for $t$, as described below.

\begin{figure}
  \fbox{\hspace*{.4cm}
    \begin{minipage}{\textwidth - .9cm}
      {\bf II-Algorithm ${\cal A}(t, i, G)$}\\[1.5ex]
      {\bf Input:} Graph $G=(V,E)$, integers $t$ and $i$, \\[1ex]
      \hspace*{.5cm} $s := 2(\log n)^2$\\
      \hspace*{.5cm} {\bf if} $i<t$ {\bf then}\\
      \hspace*{1cm} let $H := \{v\in V : d(v)\geq r(G) \}$\\
      \hspace*{1cm} let $U_G$ be a set of $s$ vertices from $H$ chosen uniformly at random\\
      \hspace*{1cm} let ${\cal V}_G := \{ H \}\cup \{ N(v) : v\in U_G\}$\\
      \hspace*{1cm} {\bf for each} $V_j\in {\cal V}_G,\ 1\leq j\leq s+1$ {\bf do}\\
      \hspace*{1.5cm} $C_j := V_j\cup {\cal A} (t, i+1, G - V_j)$\\
      \hspace*{1cm} {\bf end for}\\
      \hspace*{1cm} {\bf return} a minimum size set among $\{C_j\}_{j=1}^{s+1}$\\   
      \hspace*{.5cm} {\bf else} ($i=t$)\\
      \hspace*{1cm} {\bf return} a 2-approximation of the minimum vertex cover of $G$\\
      \hspace*{.5cm} {\bf end if}\\
      \medskip
    \end{minipage}
  }
  \caption{\label{fig:I-I}Imamura-Iwama algorithm.}
\end{figure}

\paragraph{Main Results.}

Let $P$ be a path in the recursion tree of the II-algorithm. We denote by $W_P$ the set of removed vertices 
corresponding to this path, that is, $W_P := \cup_{i=1}^{|P|} W_i^P$, where $W_i^P$ is the set $V_j$ chosen at
the $i$-th step on the path $P$.
\begin{lemma}
Let $C$ be a minimum vertex cover of $G$. Then with high probability there is a path $P$ 
in the recursion tree such that $|W_P \cap C|\geq (1-o(1))|W_P|$.
\end{lemma}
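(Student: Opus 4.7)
The plan is to trace, inductively from the root of the recursion tree, a single path $P$ along which each removed set $W_i^P$ is nearly contained in the fixed optimum $C$. Summing along $P$ will then yield $|W_P \cap C| \ge (1-o(1))|W_P|$. The randomness I will exploit is the independent uniform sample $U_G$ drawn at each node of the tree.

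Fix a threshold $\alpha := 1/\log n$ and call a node of the recursion tree \emph{good} if at least one of the following holds at that node, for the current graph $G$ with high-degree set $H$: either $|H \setminus C| \le \alpha |H|$, or some vertex of the random sample $U_G$ lies outside $C$. In the first case the child corresponding to $V_1 = H$ satisfies $|V_1 \cap C| \ge (1-\alpha)|V_1|$. In the second case, for any $v \in U_G \setminus C$ every edge incident to $v$ must be covered from the other side, so $N(v) \subseteq C$, and the child corresponding to $V_j = N(v)$ is entirely contained in $C$. In both cases the node has a child whose removed set misses $C$ by at most an $\alpha$-fraction.

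To show that each node is good with high probability, observe that if $|H \setminus C| > \alpha |H|$ then the probability that all $s = 2(\log n)^2$ uniform draws from $H$ miss $H \setminus C$ is at most $(1-\alpha)^s \le \exp(-\alpha s) = \exp(-2\log n) = n^{-2}$. Now I construct $P$ greedily: start at the root and, at each node encountered, pick a child that witnesses goodness. Since the goodness of each node along $P$ depends on its own fresh sample, a union bound over the $t$ nodes of $P$ gives that $P$ is good with probability at least $1 - t/n^2$; the value $t$ used by the algorithm satisfies $t = \polylog(n)$ throughout the nondense regime, so this is $1-o(1)$. Along such a $P$ every $W_i^P$ satisfies $|W_i^P \setminus C| \le \alpha |W_i^P|$, and summation yields $|W_P \cap C| \ge (1-\alpha)|W_P| = (1-o(1))|W_P|$, as desired.

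The main obstacle is calibrating $\alpha$ so that both horns of the dichotomy are simultaneously useful: $\alpha$ must be $o(1)$ so that the $V_1 = H$ branch contributes only a negligible non-$C$ fraction, but also $\alpha \cdot s = \omega(\log n)$ so that the failure probability per node beats the union bound along the path. The pair $(\alpha, s) = (1/\log n,\ 2(\log n)^2)$ threads both conditions, and it is precisely this balance that the sample size $s$ fixed in the II-algorithm is designed to deliver; any weaker sampling would force a larger $\alpha$ and destroy the $(1-o(1))$ guarantee.
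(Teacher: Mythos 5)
Your proof is correct and takes essentially the same route as the paper's argument (given there for the per-step lemma of the modified sampling): at each node either almost all of $H$ lies in $C$, so the branch $V_1=H$ works, or with probability $1-(1-\alpha)^s$ the sample contains some $v\notin C$, whose whole neighborhood $N(v)$ must lie in $C$, and the per-node guarantees are combined along the path. Your calibration ($\alpha=1/\log n$ with a union bound $1-t/n^{2}$) differs only cosmetically from the paper's symmetric $1/\sqrt{s}$ choice, and the closing remark that $t=\polylog(n)$ is not even needed, since $t\le n$ already gives $t/n^{2}=o(1)$.
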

A path satisfying the above condition is called a {\em good} path.

To compute the approximation ratio obtained by the II-algorithm, we need to define the function $\gamma (G)$. 
The function $\gamma (G)$ is equal to $\frac 1n$ times the minimum size of an optimal vertex cover in a graph with the same number of vertices, average, and maximum degree as $G$. 
This function can be computed explicitly:
$$
\gamma(G) = \left\{ \begin{array}{c@{\quad if \quad}l}
\frac{\bar{d}}{2\Delta} & |E| \leq \Delta(n-\Delta) \\
\frac{n+\Delta -\sqrt{(n+\Delta)^2-4\bar{d}n}}{2n} & |E| > \Delta(n-\Delta)
\end{array} \right.
$$ 
The following lemma indicates that the size of $W$ can always match this lower bound, provided the depth of the recursion is sufficient.
\begin{lemma}
\label{lem:depth}
For any path $P$ of height $t$, $|W_P|\geq (1-(1-\frac{\Delta}{4n})^t) \gamma (G) n$.
\end{lemma}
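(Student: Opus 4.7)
The plan is to proceed by induction on $i$, establishing that after the first $i$ steps of any path $P$ in the recursion tree one has $|W_P^{(i)}| \geq (1-(1-\alpha)^i)\gamma(G)n$, where $\alpha := \Delta/(4n)$ and $W_P^{(i)} := \bigcup_{j=1}^{i} W_j^P$; taking $i = t$ yields the lemma. The base case $i = 0$ is trivial. For the inductive step, write $|W_P^{(i)}| = |W_P^{(i-1)}| + |W_i^P|$ and set $y_i := \gamma(G)n - |W_P^{(i)}|$. The desired recurrence $y_i \leq (1-\alpha) y_{i-1}$ is equivalent to the single-step lower bound
\[
|W_i^P| \;\geq\; \alpha \bigl(\gamma(G)n - |W_P^{(i-1)}|\bigr),
\]
which I will prove at every step.

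To establish this, I first note that every candidate $V_j$ in $\mathcal{V}_{G^{(i-1)}}$ (with $G^{(i-1)} := G - W_P^{(i-1)}$) has size at least $r(G^{(i-1)})$: for $V_j = N(v)$ with $v \in H$ this is just the defining property of $H$, while for $V_j = H$ itself a short degree-sum computation using the identity $\bar{d}^{(i-1)} = 2r - r^2/n^{(i-1)}$ (with $r = r(G^{(i-1)})$) yields $|H| \geq r$. It therefore remains to show $r(G^{(i-1)}) \geq \alpha\bigl(\gamma(G)n - |W_P^{(i-1)}|\bigr)$. Combining the elementary inequality $r(G^{(i-1)}) \geq \bar{d}^{(i-1)}/2 = |E(G^{(i-1)})|/n^{(i-1)}$ with the edge-count bound $|E(G^{(i-1)})| \geq |E| - \Delta|W_P^{(i-1)}|$ (each removed vertex has degree at most $\Delta$ in $G$) reduces the problem to an algebraic inequality between $|E|$, $\Delta$, $n$, $\gamma(G)n$, and $|W_P^{(i-1)}|$.

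The verification then splits on the two cases of the closed form for $\gamma(G)$. In the first case, $|E| \leq \Delta(n-\Delta)$, we have $\Delta\gamma(G)n = |E|$, and the desired inequality collapses to the one-line check $|W_P^{(i-1)}| \leq \gamma(G)n$. The second case, $|E| > \Delta(n-\Delta)$, is the main technical step: here $\gamma(G)n$ is the smaller root of $x^2 - (n+\Delta)x + \bar{d}n = 0$, equivalently $2|E| = \gamma(G)n\bigl(n+\Delta - \gamma(G)n\bigr)$, and $\gamma(G)n > n-\Delta$. Plugging this identity into the target inequality and simplifying reduces everything to an elementary quadratic manipulation in the variables $\gamma(G)n$, $\Delta$, and $|W_P^{(i-1)}|$. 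The main obstacle is that the naive edge-count bound becomes loose precisely when $|W_P^{(i-1)}|$ approaches $\gamma(G)n$ in this dense regime, so one must either sharpen it by accounting for the internal edges of $W_P^{(i-1)}$ already removed, or simply truncate the induction by observing that once $|W_P^{(i-1)}| \geq \gamma(G)n$ the lemma's conclusion $|W_P^{(i)}| \geq (1-(1-\alpha)^t)\gamma(G)n$ is trivially satisfied by monotonicity of $|W_P^{(\cdot)}|$ in $i$.
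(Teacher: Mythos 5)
First, note that the paper itself gives no proof of this lemma: it is quoted verbatim from Imamura and Iwama \cite{II05}, so your argument can only be judged on its own merits. Its skeleton is sound: the per-step target $|W_i^P|\geq \frac{\Delta}{4n}\bigl(\gamma(G)n-|W_P^{(i-1)}|\bigr)$ does yield the stated geometric bound, the fact that every candidate set has size at least $r(G^{(i-1)})$ is correct (both for $N(v)$ with $v\in H$ and, via the degree-sum count, for $H$ itself), and your Case 1 ($|E|\leq\Delta(n-\Delta)$, where $\Delta\gamma(G)n=|E|$) goes through, with the truncation at $|W_P^{(i-1)}|\geq\gamma(G)n$ handling the only exception. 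Since the paper only invokes the lemma under the hypothesis $\Delta\leq n/2$, where $|E|\leq\Delta(n-\Delta)$ always holds, your argument does cover the regime in which the lemma is actually used here.

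However, Case 2 ($|E|>\Delta(n-\Delta)$, which forces $\Delta>n/2$) is a genuine gap, and neither of your two suggested patches closes it. The chain $r(G^{(i-1)})\geq |E(G^{(i-1)})|/n^{(i-1)}$ together with $|E(G^{(i-1)})|\geq |E|-\Delta|W_P^{(i-1)}|$ provably fails to deliver the per-step bound long before $|W_P^{(i-1)}|$ reaches $\gamma(G)n$: take $G=K_n$, so $\Delta=n-1$, $\gamma(G)n=n-1$, and $w:=|W_P^{(i-1)}|\approx n/2$; then $|E|-\Delta w\leq 0$, so your lower bound on $r(G^{(i-1)})$ is vacuous, while the required right-hand side $\frac{\Delta}{4n}(\gamma(G)n-w)$ is still $\Theta(n)$ (already at $w\approx 0.45\,n$ the inequality $\frac{|E|-\Delta w}{n-w}\geq\frac{\Delta}{4n}(\gamma(G)n-w)$ is violated). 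Truncating the induction only helps once $w\geq\gamma(G)n$, which is far beyond the point of failure, so it does not rescue this regime. The other fix you mention, sharpening the edge count by the edges internal to $W_P^{(i-1)}$, is exactly the missing content: it requires a lower bound on the number of edges spanned by the removed neighborhoods, which you neither state nor prove, and which does not follow from anything established so far. As written, the proposal therefore proves the lemma only in the range $|E|\leq\Delta(n-\Delta)$, not in the generality in which it is stated.
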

We wish to compute a subset $W$ of size arbitrarily close to $\gamma (G) n$. From the previous lemma, this requires the number of iterations 
to be $t = \Theta (n/\Delta )$. The size of the recursion tree is $(s+1)^t$, which is polynomial only if $\Delta = \Omega (n \log \log n / \log n)$.

From Lemma~\ref{lem:subset}, the corresponding approximation factor is then arbitrarily close to $\frac{2}{1+\gamma (G)}$.
In the range of
$\Delta\leq \frac{n}{2}$, the inequality $|E|\leq \Delta (n-\Delta )$ is always satisfied, since $|E|=\frac{\bar{d}n}{2}\leq\frac{\Delta n}{2}\leq\Delta(n-\Delta)$. 
Thus, we have $\gamma (G) = \bar{d} / (2\Delta)$ and the result follows.

\begin{theorem}
There is a randomized polynomial-time approximation algorithm with an approximation factor arbitrarily close to $\frac 2{1+\left. \frac{\bar{d}}{2\Delta}\right. }$ for the Vertex Cover problem in graphs with average degree $\bar{d}$
and maximum degree $\Delta$, such that $\Delta\leq\frac{n}{2}$ and $\Delta = \Omega(n \log \log n / \log n)$.
\end{theorem}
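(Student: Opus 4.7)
The plan is to invoke the II-algorithm ${\cal A}(t,1,G)$ directly, with its built-in sample size $s = 2(\log n)^2$ and with a recursion depth $t$ chosen just large enough that Lemma~\ref{lem:depth} forces $|W_P|$ close to $\gamma(G)n$ on every root-to-leaf path, then argue that the minimum $C_j$ returned matches the bound coming from combining Lemma~\ref{lem:subset} with a good path.

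Concretely, I would fix an arbitrarily small $\eps>0$, pick a constant $c$ with $e^{-c/4}\leq \eps$, and set $t := \lceil c n/\Delta\rceil$. By Lemma~\ref{lem:depth}, every path $P$ of height $t$ then satisfies $|W_P|\geq (1-(1-\Delta/(4n))^{t})\gamma(G)n \geq (1-\eps)\gamma(G)n$. By the good-path lemma, with high probability there is some path $P^*$ along which $|W_{P^*}\cap C|\geq (1-o(1))|W_{P^*}|$ for a fixed optimal vertex cover $C$, which gives $n_1-n_2\geq (1-o(1))|W_{P^*}|\geq (1-\eps-o(1))\gamma(G)n$ in the notation of Lemma~\ref{lem:subset}. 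Applying that lemma to $W_{P^*}$ yields a vertex cover $C'$ with $|C'|\leq \frac{2}{1+(1-\eps-o(1))\gamma(G)}|C|$; because the II-algorithm explicitly compares all $C_j$ along every branch of its recursion tree, the output is no worse than this $C'$, so the approximation ratio is arbitrarily close to $\frac{2}{1+\gamma(G)}$ as $\eps\to 0$.

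For the running time, the recursion tree has size $(s+1)^t = ((2\log^2 n)+1)^{\Theta(n/\Delta)}$, and each leaf runs a standard $2$-approximation. This is polynomial in $n$ precisely when $n/\Delta = O(\log n/\log\log n)$, i.e.\ $\Delta = \Omega(n\log\log n/\log n)$, which is the degree hypothesis. Finally, the assumption $\Delta\leq n/2$ gives $|E|=\bar{d}n/2\leq \Delta n/2\leq \Delta(n-\Delta)$, so the explicit formula for $\gamma(G)$ collapses to $\bar{d}/(2\Delta)$, yielding the stated ratio.

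The main obstacle --- and the reason the theorem's range of $\Delta$ is not larger --- is the tight coupling between depth and branching: Lemma~\ref{lem:depth} forces $t=\Theta(n/\Delta)$ to push $|W_P|$ near $\gamma(G)n$, while the polylogarithmic $s$ is what buys us the existence of a good path with high probability through the sampling step. Keeping $(s+1)^t$ polynomial is exactly what pins down the condition $\Delta=\Omega(n\log\log n/\log n)$, and any attempt to relax it (as the next subsections of the paper do for subdense inputs) must necessarily alter the sampling procedure rather than the combinatorial skeleton above.
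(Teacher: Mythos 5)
Your proposal is correct and follows essentially the same route as the paper: choose $t=\Theta(n/\Delta)$ so that Lemma~\ref{lem:depth} pushes $|W_P|$ to $(1-\eps)\gamma(G)n$, combine the good-path lemma with Lemma~\ref{lem:subset} to get the ratio $\frac{2}{1+\gamma(G)}$, note that $\Delta\leq n/2$ forces $\gamma(G)=\bar{d}/(2\Delta)$, and observe that the tree size $(s+1)^t$ with $s=2(\log n)^2$ is polynomial exactly when $\Delta=\Omega(n\log\log n/\log n)$. Your added detail (the explicit constant $c$ with $e^{-c/4}\leq\eps$ and the bookkeeping $n_1-n_2\geq(1-o(1))|W_{P^*}|$) only makes the paper's sketch more precise.
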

\paragraph{A Modified Sampling Procedure.}

We now describe a modified version of the II-algorithm, in which the sample size $s$ is reduced.
We denote by $C$ a minimum vertex cover, by $W_i$ the set obtained at the $i$-th step in
a good path, and by $W$ the set $W := \cup_i W_i$.

\begin{lemma}
$|W_i \cap C| > (1 - 1/ \sqrt{s})|W_i|$ with probability at least $1-1/e^{\sqrt{s}}$.
\end{lemma}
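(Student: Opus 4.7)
The plan is to exhibit, for each realization of the random sample $U_G$, a candidate $V_j \in \mathcal{V}_G$ that meets the required bound; since $\mathcal{V}_G$ contains both $H$ itself and every neighborhood $N(v)$ for $v\in U_G$, a good path can always select this $V_j$. Let $p := |H \cap C|/|H|$ denote the fraction of the high-degree set that lies in the optimal cover $C$, and split the analysis into two cases according to the value of $p$.

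If $p \geq 1 - 1/\sqrt{s}$, then $H$ already satisfies $|H \cap C| \geq (1 - 1/\sqrt{s})|H|$ deterministically, so taking $W_i = H$ suffices and the claimed probability bound is trivial. Otherwise $p < 1 - 1/\sqrt{s}$, meaning that at least a $1/\sqrt{s}$ fraction of $H$ lies outside $C$. Since the $s$ elements of $U_G$ are drawn uniformly at random from $H$, the probability that every one of them lands in $C$ is at most $(1 - 1/\sqrt{s})^s \leq e^{-\sqrt{s}}$ (using $1-x \leq e^{-x}$; sampling without replacement only decreases this probability by negative association). Thus with probability at least $1 - 1/e^{\sqrt{s}}$ there is some $v \in U_G$ with $v \notin C$. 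Because $C$ is a vertex cover, every edge incident to $v$ must be covered by its other endpoint, so $N(v) \subseteq C$. Setting $W_i = N(v)$ then yields $|W_i \cap C| = |W_i|$, which exceeds $(1 - 1/\sqrt{s})|W_i|$.

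The argument is essentially a two-line case split; the main conceptual point is not to overlook the role of $H$ itself as a member of $\mathcal{V}_G$, which rescues the bound in the ``dense'' case $p \geq 1 - 1/\sqrt{s}$ when the random sample might otherwise fail to produce a witness outside $C$. No further obstacle appears; the later lemmas about a good path surviving the recursion will follow by a union bound over the $t$ levels against the per-level failure probability $1/e^{\sqrt{s}}$ established here.
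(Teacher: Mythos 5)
Your proof is correct and follows essentially the same route as the paper: a case split on whether a $1-1/\sqrt{s}$ fraction of $H$ lies in $C$, and otherwise the observation that with probability at least $1-(1-1/\sqrt{s})^s$ the sample contains some $v\notin C$, whose neighborhood $N(v)$ must then be entirely inside $C$. Your use of $1-x\le e^{-x}$ (and the remark on sampling without replacement) gives the stated bound non-asymptotically, a minor tightening of the paper's limit argument but not a different approach.
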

\begin{proof}
If a fraction at least $1-1/\sqrt{s}$ of $H$ intersects $C$, then we are done. Otherwise, a random vertex $v$ of $H$ is such that $v\in C$ with probability at most $1-1/\sqrt{s}$. Hence with probability $1-(1-1/\sqrt{s})^s \to 1-1/e^{\sqrt{s}}$ (for sufficiently large values of $n$) we get a vertex $v\not\in C$ in our sample. The neighborhood $N(v)$ of this vertex must therefore be completely contained in $C$.   
\end{proof}
Multiplying the probabilities along the path yields the following:
\begin{lemma}
\label{lem:prob}
With probability $(1-1/e^{\sqrt{s}})^t$, we have $|W\cap C|>(1 - 1/\sqrt{s})|W|$.
\end{lemma}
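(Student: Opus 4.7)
The plan is to apply the previous lemma at each of the $t$ levels along the fixed good path $P$, then invoke independence of the recursive samplings to multiply the per-level success probabilities.

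First I would observe that the sets $W_1, W_2, \ldots, W_t$ along the path are pairwise disjoint: at the $i$-th level the algorithm recurses on $G - V_{j}$, so the set $W_{i+1}$ is drawn entirely from $V \setminus \bigcup_{k \le i} W_k$. Consequently $|W| = \sum_{i=1}^{t} |W_i|$ and $|W \cap C| = \sum_{i=1}^{t} |W_i \cap C|$. Next, the random choices at distinct levels are independent, since at each recursive call the sample $U_{G}$ of size $s$ is drawn uniformly at random and independently of previous calls. Thus the joint event
\[
  \bigcap_{i=1}^{t} \Bigl\{ |W_i \cap C| > (1 - 1/\sqrt{s}) |W_i| \Bigr\}
\]
has probability at least $(1 - 1/e^{\sqrt{s}})^t$ by the preceding lemma together with independence.

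Finally, conditional on this joint event occurring, I would sum the per-level inequalities:
\[
  |W \cap C| \;=\; \sum_{i=1}^{t} |W_i \cap C| \;>\; (1 - 1/\sqrt{s}) \sum_{i=1}^{t} |W_i| \;=\; (1 - 1/\sqrt{s}) |W|,
\]
which yields the claimed bound.

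The proof is essentially a routine union-over-levels argument, so I do not expect a serious obstacle. The one point that deserves care is the independence claim: although the identity of a good path $P$ is itself data-dependent, once we condition on following a particular fixed sequence of choices $V_{j_1}, \ldots, V_{j_t}$ the subsequent samples $U_{G - V_{j_1}}, U_{G - V_{j_1} - V_{j_2}}, \ldots$ are independently drawn, which is all that is required to multiply the probabilities and obtain $(1-1/e^{\sqrt{s}})^t$.
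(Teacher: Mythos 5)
Your proposal is correct and is essentially the paper's own argument: the paper gives no proof beyond the phrase ``multiplying the probabilities along the path,'' and your write-up supplies exactly that reasoning, with the disjointness of the $W_i$, the per-level application of the preceding lemma, and the summation of the inequalities made explicit. The only point worth noting is that the good path is chosen adaptively (the good candidate $V_j$ at each level depends on the sample), so the product bound is really a chain of conditional probabilities each at least $1-1/e^{\sqrt{s}}$ given the history, rather than independence along a path fixed in advance; this does not affect the conclusion.
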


We now analyze the behavior of a modified version of the II-algorithm, in which we set the two parameters $s$ and $t$ to the following values:
\begin{eqnarray}
s & := & (\ln \ln n - \ln \ln \ln \ln n)^2 \\
t & := & a e^{\sqrt{s}} .
\end{eqnarray}

First, we check that the probability of having an arbitrarily high proportion of $W$ belonging to $C$ is constant. From Lemma~\ref{lem:prob}, this probability is
$$
(1-1/e^{\sqrt{s}})^t = (1 - 1/e^{\sqrt{s}})^{a e^{\sqrt{s}}} = \left((1 - 1/x)^x\right)^a \to 1/e^a ,
$$
where $x=e^{\sqrt{s}}$ and we assume that $n$ is sufficiently large. Thus, for these parameters, the algorithm computes a fraction $1-o(1)$ of $W$ is contained in
an optimal vertex cover with a probability close to $1/e^a$.\\

Next, we show that the algorithm runs in polynomial time. The size of the search tree is $(s+1)^t = (s+1)^{a e^{\sqrt{s}}}$, hence we require: 
\begin{eqnarray}
(s+1)^{a e^{\sqrt{s}}} & = & \mathrm{poly}(n) \\
\sqrt{s} + \ln \ln s & < & \ln \ln n + O(1) .
\end{eqnarray}
For the chosen value of $s$, we obtain $e^{\sqrt{s}} = \frac {\ln n}{\ln \ln \ln n}$, 
and $\sqrt{s} + \ln \ln s \leq \ln \ln n - \ln \ln \ln \ln n + \ln \ln \ln \ln n + O(1) = \ln \ln n + O(1)$. 
So for this sample size, the running time of the algorithm remains polynomial.\\

Finally, from Lemma~\ref{lem:depth}, the desired approximation factor is obtained only when $t = \Theta\left( \frac n\Delta \right)$. This yields: 
$$
\Delta = \Omega \left( \frac nt\right) = \Omega \left( \frac n{e^{\sqrt{s}}} \right) = \Omega \left( n \cdot \frac {\ln \ln \ln n} {\ln n} \right) .
$$
In general, the parameterized running time with $\psi(n)$ as parameter can be
expressed by the following:
$$n^{O(1)}(s+1)^t=n^{O(1)}(s+1)^{O(\frac{n}{\Delta})}\leq n^{O(1)}(\ln\ln(n))^{O(\frac{n}{\Delta})} = n^{O(1)}2^{O(\psi(n)\ln\ln\ln(n))}$$
Taking into account the above constraint, we also get a quasi-polynomial time algorithm in instances with $\Delta = \Omega (n/\polylog n)$.

\begin{theorem}\label{thm:mainvc}
There is a randomized approximation algorithm with an approximation factor arbitrarily close to $2/(1+\bar{d}/(2\Delta))$ for the Vertex Cover problem in graphs with average degree $\bar{d}$, 
maximum degree $\Delta\leq n/2$ and $\Delta=\Omega(n/\psi(n))$ 
with
running time $n^{O(1)} 2^{O(\psi(n)\log\log\log(n))}$ which 
is
polynomial if $\psi (n) = O\left(\frac{\log n}{\log \log \log n}\right)$.
\end{theorem}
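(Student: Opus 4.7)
The plan is to instantiate the modified II-algorithm from the preceding paragraphs with sample size $s$ and recursion depth $t$ tuned to the density parameter $\psi(n)$. The three ingredients already in place are: (i) Lemma~\ref{lem:subset}, which turns a set $W$ with large intersection with an optimal cover into a vertex cover of approximation ratio $2/(1+\gamma(G))$; (ii) Lemma~\ref{lem:depth}, guaranteeing that $|W_P|\geq (1-o(1))\gamma(G)n$ once the recursion reaches depth $t = \Theta(n/\Delta)$; and (iii) the refined sampling estimate that each level of the recursion is ``good'' with probability at least $1-1/e^{\sqrt{s}}$. The task is to combine these three constraints under the hypothesis $\Delta = \Omega(n/\psi(n))$.

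Concretely, I would set $s := (\ln\ln n - \ln\ln\ln\ln n)^2$, so that $e^{\sqrt{s}} = \ln n/\ln\ln\ln n$, and $t := \lceil c\,n/\Delta\rceil = \Theta(\psi(n))$ for a suitable constant $c$ dictated by Lemma~\ref{lem:depth}. Since $\Delta\leq n/2$ forces $|E|\leq \Delta(n-\Delta)$, the explicit formula for $\gamma(G)$ collapses to $\gamma(G) = \bar{d}/(2\Delta)$, and Lemma~\ref{lem:subset} applied to the best path in the recursion tree returns a vertex cover of ratio arbitrarily close to $2/(1+\bar{d}/(2\Delta))$. The probability that any fixed path is good is $(1-1/e^{\sqrt{s}})^{t}$; whenever $t = O(e^{\sqrt{s}})$ this remains a positive constant, and outside this range I would amplify by a constant number of independent restarts of the whole procedure, which only costs a constant factor in running time.

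The running time is dominated by the recursion tree size $(s+1)^t$ times the per-leaf polynomial overhead for the concluding $2$-approximation call. Using $\log(s+1) = O(\log\log\log n)$,
\[
n^{O(1)}(s+1)^{t} \;\leq\; n^{O(1)}(\ln\ln n)^{O(\psi(n))} \;=\; n^{O(1)}2^{O(\psi(n)\log\log\log n)},
\]
which is polynomial precisely when $\psi(n) = O(\log n/\log\log\log n)$, as claimed. The main technical obstacle is the three-way coupling between $s$, $t$, and the success probability: Lemma~\ref{lem:depth} pushes $t$ upward so that the approximation ratio approaches $2/(1+\gamma(G))$, the sampling lemma pushes $t$ downward relative to $e^{\sqrt{s}}$ so the success probability does not collapse, and the running-time analysis demands $(s+1)^t$ stay subpolynomial. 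The value $s = (\ln\ln n - \ln\ln\ln\ln n)^2$ is the sweet spot where $\log(s+1) = O(\log\log\log n)$ keeps the tree manageable while $e^{\sqrt{s}}$ is still comparable to the largest depths $t = \Theta(\psi(n))$ permitted by the theorem.
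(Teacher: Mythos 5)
Your proposal follows essentially the same route as the paper: the same sample size $s=(\ln\ln n-\ln\ln\ln\ln n)^2$, the same depth $t=\Theta(n/\Delta)=\Theta(\psi(n))$ dictated by Lemma~\ref{lem:depth}, the observation that $\Delta\leq n/2$ forces $\gamma(G)=\bar{d}/(2\Delta)$ so that Lemma~\ref{lem:subset} gives the ratio $2/(1+\bar{d}/(2\Delta))$, and the identical running-time computation $n^{O(1)}(s+1)^t=n^{O(1)}2^{O(\psi(n)\log\log\log n)}$, polynomial when $\psi(n)=O(\log n/\log\log\log n)$. In that polynomial regime your argument is complete and matches the paper.

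The one slip is your handling of the regime $t=\omega(e^{\sqrt{s}})$, i.e.\ $\psi(n)=\omega(\log n/\log\log\log n)$, which the theorem's general running-time claim still covers. There the single-run success probability $(1-1/e^{\sqrt{s}})^{t}\approx e^{-t/e^{\sqrt{s}}}$ tends to $0$, so ``a constant number of independent restarts'' costing ``only a constant factor'' cannot restore constant (let alone high) success probability; amplification needs roughly $e^{t/e^{\sqrt{s}}}$ repetitions. The conclusion survives because $e^{t/e^{\sqrt{s}}}=2^{O(\psi(n)\log\log\log n/\log n)}$, which is absorbed into the stated bound $n^{O(1)}2^{O(\psi(n)\log\log\log n)}$ (and one extra $\log n$ factor of repetitions upgrades this to high probability), but the justification as you wrote it is quantitatively wrong and should be replaced by this count of repetitions. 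For what it is worth, the paper's own write-up simply states the parameterized running time without discussing amplification in this regime at all, so making the repetition count explicit actually tightens the argument rather than merely repairing your version.
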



\section{The Subdense Connected Vertex Cover Problem}
\label{sec:cvc}

In this section, we study the approximability of nondense instances of the Connected 
Vertex Cover (CVC) problem. Furthermore, we obtain an approximation algorithm for a
generalization of the CVC problem.

Cardinal and Levy \cite{CL10} studied connected vertex covers in everywhere and
average dense graphs. For both cases, they gave approximation algorithms
with asymptotically optimal approximation ratios assuming the Unique Game Conjecture.
In the same work, they conjectured that the Connected Vertex Cover problem
restricted to even less dense graphs
can be approximated with an approximation ratio smaller than $2$ in 
polynomial time and left it for future work.

We introduce the Subset Connected Vertex Cover (SCVC) problem which
generalizes the CVC problem and present a constant-factor
approximation algorithm. By combining the approximation
algorithm for the SCVC problem with the modified II-algorithm, we prove the existence of a randomized 
approximation algorithm with approximation ratio less than $2$ for the CVC problem 
in a restricted class of subdense graphs running in polynomial time. If we allow quasi-polynomial
running time, we obtain the same approximation ratio for a larger class of subdense instance.    

\subsection{The Subset Connected Vertex Cover Problem}   

\begin{definition}{Subset Connected Vertex Cover problem (SCVC)}\\
Given a connected graph $G=(V,E)$ and a set $S\subset V$, a feasible solution consists of a 
connected vertex cover $Y$ of $G$ such that $S\subseteq Y$.
The task is to find such a $Y$ with minimum cardinality.
\end{definition}

The maximal matching heuristic is a simple $2$-approximation algorithm for
the Vertex Cover problem. It consists of choosing all vertices in any maximal matching.
In order to obtain a $2$-approximate solution for the CVC problem, we have to
extend this concept:
For a given connected graph $G=(V,E)$, a \emph{connected maximal matching}
(CMM) is a maximal matching $M$ in $G$ such that $G[V(M)]$ is connected.

Clearly, $V(M)$ is a $2$-approximate solution for the CVC problem and 
$M$ can be computed in polynomial time.

As a subroutine in our approximation algorithm for the SCVC problem,
we will use an approximation algorithm for
the Steiner Tree problem for special instances called quasi-bipartite:
An instance of the Steiner Tree problem $G=(V,E)$ and  terminal set $S$
is called \emph{quasi-bipartite}, if $V\setminus S$ forms an independent set in $G$.
Gr{\"o}pl et al.~\cite{GHNP02} gave an $1.217$-approximation algorithm for
quasi-bipartite instances, which we will note as $\mathcal{A}_{qbST}$ in the following.
For simplicity, we will note the approximation ratio of $\mathcal{A}_{qbST}$ as
$r_{qbST}$.

Our algorithms works in two phases.
The first phase computes a connected maximal matching in every connected component
of $G[V\bs S]$.
The second phase connects the remaining connected components by using the algorithm
$\mathcal{A}_{qbST}$ on quasi-bipartite instances of the Steiner Tree problem.\\
\noindent We formulate now an algorithm $\mathcal{A}_{SCVC}$ for the SCVC problem.
\begin{figure}[h]
  \fbox{\hspace*{.4cm}
    \begin{minipage}{\textwidth - .9cm}
      {\bf Algorithm $\mathcal{A}_{SCVC}$}\\[1.5ex]
      {\bf Input:} $G=(V,E)$ and $S\subseteq V$\\[1ex]
      \ding{192} 
       $S_1:=S$ and $S_2:=\emptyset$\\
      \hspace*{.5cm} {\bf while} $E(G[V\backslash S_1])\not=\emptyset$\\
      \hspace*{1cm}  compute a CMM $M_c$ in $G[V\backslash S_1]$\\
      \hspace*{1cm}  starting with a vertex $c\in V\backslash S_1$ which is connected to a $s\in S$ \\
      \hspace*{1cm}  $S_1:=V(M_c)\cup S_1$\\
      \hspace*{.5cm} {\bf endwhile}\\
      \ding{193}  contract every connected component $C$ of $G[S_1]$ into a vertex $s_C$\\
       \hspace*{.5cm}$S_2:=\{s_C\mid C \textrm{ is a connected component of } G[S_1]\}$\\
       \hspace*{.5cm}Let $G'$ be the graph after vertex contraction in $G$\\
      \ding{194} compute a Steiner Tree  $T$ for $S_2$ in $G'$ using $\mathcal{A}_{qbST}$\\
      {\bf Return}  $S_1\cup (V(T)\backslash S_2)$\\
      \medskip
    \end{minipage}
  }
  \caption{Algorithm $\mathcal{A}_{SCVC}$.}\label{figscvc}
\end{figure}

We are ready to prove the following
\begin{theorem}
Given a connected graph $G=(V,E)$ and $S\subset V$, the algorithm 
$\mathcal{A}_{SCVC}$ (see Figure~\ref{figscvc})
has an approximation ratio at most
$\max\left\{r_{qbST},\frac{2}{1+\frac{|S|}{|V|}}\right\}\;$.
\end{theorem}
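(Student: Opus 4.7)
The plan is to bound the two disjoint parts of the output, $|S_1|$ and $|V(T)\setminus S_2|$, separately and sum them. Let $OPT$ denote any optimal SCVC solution and let $m$ be the total number of matching edges added across the iterations of step \ding{192}.

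For step \ding{192}, each added CMM is vertex-disjoint from $S$ and from all previously added matchings, so $|S_1|=|S|+2m$. Because every vertex cover of $G$ must contain at least one endpoint of each of the $m$ pairwise disjoint matching edges, $|OPT\cap(S_1\setminus S)|\geq m$ and therefore $|OPT\cap S_1|\geq |S|+m$. Combined with the trivial bound $|S_1|\leq |V|$, a short calculation yields
\[
\frac{|S_1|}{|OPT\cap S_1|}\;\leq\;\frac{|S|+2m}{|S|+m}\;\leq\;\frac{2|V|}{|V|+|S|}\;=\;\frac{2}{1+|S|/|V|}.
\]

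For steps \ding{193}--\ding{194}, once the while loop terminates $V\setminus S_1$ is edge-free in $G$ and remains an independent set after the contraction, so $(G',S_2)$ is a quasi-bipartite Steiner-tree instance on which $\mathcal{A}_{qbST}$ is applicable. The key structural observation is that every component of $G[S_1]$ contains at least one vertex of $S$: each CMM is rooted at a vertex $c\in V\setminus S_1$ adjacent to some $s\in S$, so the matching merges into the component of $s$ in $G[S_1]$. Since $S\subseteq OPT$ and $OPT$ is connected in $G$, contracting the components of $G[S_1]$ sends $G[OPT]$ to a connected subgraph of $G'$ that spans $S_2$ and uses at most $|OPT\setminus S_1|$ non-terminal vertices; hence the optimum Steiner tree $T^*$ in $G'$ satisfies $|V(T^*)\setminus S_2|\leq |OPT\setminus S_1|$. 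Applying the guarantee of $\mathcal{A}_{qbST}$ yields $|V(T)\setminus S_2|\leq r_{qbST}\cdot|OPT\setminus S_1|$, up to an additive slack of order $|S_2|$ arising from converting between edge count and non-terminal count.

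Setting $\rho:=\max\{r_{qbST},\,2/(1+|S|/|V|)\}$ and using $|OPT|=|OPT\cap S_1|+|OPT\setminus S_1|$, I would combine the two phase-bounds as
\[
|S_1|+|V(T)\setminus S_2|\;\leq\;\frac{2}{1+|S|/|V|}\,|OPT\cap S_1|\;+\;r_{qbST}\,|OPT\setminus S_1|\;\leq\;\rho\,|OPT|,
\]
which is exactly the claimed ratio. The main obstacle I anticipate is the last step of the second phase: the approximation guarantee of $\mathcal{A}_{qbST}$ naturally bounds the edge count (equivalently $|V(T)|-1$), not the non-terminal count $|V(T)\setminus S_2|$ that is actually being charged, and the two quantities differ by $|S_2|-1$. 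The resulting additive term must be absorbed by exploiting $|S_2|\leq |S|\leq |S_1|$ together with the quantitative slack in $|S_1|\leq |V|$ whenever the Steiner-tree contribution is nontrivial.
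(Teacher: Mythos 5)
Your phase-by-phase setup is sound and follows essentially the same route as the paper: the bound $|S_1|=|S|+2m$ with $|OPT\cap S_1|\geq |S|+m$ and $|S_1|\leq |V|$ is exactly the paper's matching/$|S|$ lower bound, and your contraction argument showing that the optimum Steiner tree $T^*$ for $S_2$ in $G'$ satisfies $|V(T^*)\setminus S_2|\leq |OPT\setminus S_1|$ is a clean, direct version of the paper's auxiliary-graph comparison (the paper contracts $C\cap OPT$ into $G_{OPT}$ and compares $cost(T_{OPT})\geq cost(T')$; your statement is equivalent and even slightly stronger in the denominator). The quasi-bipartiteness of $(G',S_2)$ and the fact that every component of $G[S_1]$ meets $S$ are also argued correctly.

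The genuine gap is the final combination, which is precisely where the paper invests its effort. Your displayed inequality $|S_1|+|V(T)\setminus S_2|\leq \frac{2}{1+|S|/|V|}|OPT\cap S_1|+r_{qbST}|OPT\setminus S_1|$ silently drops the additive term you yourself identify: converting the edge-cost guarantee of $\mathcal{A}_{qbST}$ into a Steiner-vertex count gives $|V(T)\setminus S_2|\leq r_{qbST}|V(T^*)\setminus S_2|+(r_{qbST}-1)(|S_2|-1)$, and saying this slack "must be absorbed" is not a proof — indeed, the termwise charging you propose cannot absorb it. If one keeps the slack and maximizes the resulting ratio subject only to the constraints you have established ($|S|+2m\leq|V|$, $|S_2|\leq|S|$, $|V(T^*)\setminus S_2|\leq|OPT\setminus S_1|$), the worst case (at $m=(|V|-|S|)/2$ and small Steiner contribution) evaluates to $\frac{2|V|+2(r_{qbST}-1)|S|}{|V|+|S|}$, which strictly exceeds both $r_{qbST}$ and $\frac{2}{1+|S|/|V|}$ whenever $0<|S|<|V|$; note also that the slack is largest when $|S_2|$ is large, which is perfectly compatible with $|S_1|$ being small, so the "slack in $|S_1|\leq|V|$" does not automatically trade off against it. What the paper does instead is carry the term $(r_{qbST}-1)(|S_2|-1)\leq(r_{qbST}-1)(|S|-1)$ into a single fraction $F=\frac{|S_1^f\setminus S|+r_{qbST}(|S|+|I|)}{|S_1^f\setminus S|/2+|S|+|I|}$ (with $|I|$ the non-terminal count of the optimal tree in $G'$) and then analyzes the maximum of this fraction jointly over all parameters; your proposal stops before this computation, so the claimed ratio is not established.
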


\begin{proof}
Let $OPT$ be an optimal connected vertex cover of $G$ such that $S\subseteq OPT$.
Furthermore, let $S^f_1$ be the set $S_1$ at the beginning of phase \ding{193}
and $C_a$ be the set of connected components of 
$G[S^f_1]$. Since there exists for every connected component $C\in C_a$ a $s_C\in S\cap C$,
 we know that $|C\cap OPT|\geq 1$. Furthermore, it implies that
the cardinality of $C_a$ is at most $|S|$. Hence, we have to connect 
$|S_2|\leq |S|$ terminals in the graph $G'$. Since we computed a connected maximal matching in every 
connected component of $G[V\backslash S]$, the remaining vertices $V\backslash S^f_1$
form an independent set in $G$ and therefore, they must form an independent set in $G'$.
Thus, we can use an approximation algorithm for the Steiner Tree problem restricted to
quasi-bipartite instances to connect all the vertices in $S_2$.\\

In order to analyze the cost of the new vertices introduced in the phase \ding{194},
we will use an auxiliary graph construction:
For every $C\in C_a$, we contract the set $C\cap OPT$ into $s'_C$ in $G$. Let $G_{OPT}$ 
be the resulting graph and $S'_{OPT}=\{s'_C\mid C\in C_a\}$. 
Since we have $|C|\geq |C\cap OPT|\geq 1$ for every $C\in C_a$,
the cost of a minimum Steiner Tree $T_{OPT}$ for $S'_{OPT}$ in $G_{OPT}$ cannot be lower than the cost 
of a minimum Steiner Tree $T'$ for $S_2$ in $G'$. We want to lower bound $|OPT|$
by dividing the total cost into the cost per component in $C_a$ and the cost to connect the 
contracted components. Hence, we get $|OPT|\geq \sum_{C\in C_a} |C\cap OPT|+|V(T_{OPT})\bs S'_{OPT}|$.
Clearly, $\sum_{C\in C_a} |C\cap OPT|$ can be lower bounded by $|S|$ and the number of edges of the
computed connected maximal matchings.
Finally, we introduce $I=V(T')\backslash S_2$.

We are ready now to analyze the approximation ratio $R$ of the algorithm $\mathcal{A}_{SCVC}$:
\begin{eqnarray}
R &\leq & \frac{|S^f_1|+|V(T)\backslash S_2|}{|S^f_1\backslash S|/2+|S|+|V(T_{OPT})\backslash S'_{OPT}|}\nonumber\\
&\leq & \frac{|S^f_1|+\{cost(T)+1-|S_2|\}}{|S^f_1\backslash S|/2+|S|+\{cost(T_{OPT})+1-|S'_{OPT}|\}}\nonumber\\
&\leq & \frac{|S^f_1\backslash S|+|S|+cost(T)+1-|S_2|}{|S^f_1\backslash S|/2+|S|+cost(T')+1-|S_2|} \quad\quad\quad\quad\quad(*)\nonumber\\
&\leq & \frac{|S^f_1\backslash S|+|S|+r_{qbST}(|S_2|+|I|-1)+1-|S_2|}{|S^f_1\backslash S|/2+|S|+(|S_2|+|I|-1)+1-|S_2|}\nonumber\\
&\leq & \frac{|S^f_1\backslash S|+|S|+(r_{qbST}-1)(|S_2|-1)+r_{qbST}|I|}{|S^f_1\backslash S|/2+|S|+|I|}\nonumber\\
&\leq & \frac{|S^f_1\backslash S|+|S|+(r_{qbST}-1)(|S|-1)+r_{qbST}|I|}{|S^f_1\backslash S|/2+|S|+|I|}\nonumber\\ 
&\leq & \frac{|S^f_1\backslash S|+r_{qbST}(|S|+|I|)}{|S^f_1\backslash S|/2+|S|+|I|} =: F \nonumber 
\end{eqnarray}
In $(*)$, we used the facts that $cost(T_{OPT})\geq cost(T') $ and $|S_2|=|S'_{OPT}|=|C_a|$.\\

We have to analyze the worst case behavior of the fraction $F$. Clearly, $F$ cannot be smaller than $r_{qbST}$. On the other hand,     
since $|S^f_1\backslash S|$ is the term with the largest coefficient in $F$, we can maximize 
$F$ by maximizing $|S^f_1\backslash S|$. Therefore, we 
set $|OPT|=|S|+|S^f_1\backslash S|/2$ and $|V|=|S|+|S^f_1\backslash S|$. In due consideration of the constraints, we derive
\begin{eqnarray*}
F & \leq & \max\left\{r_{qbST},\frac{|S|+|S^f_1\backslash S|}{|S|+|S^f_1\backslash S|/2}\right\}
    \, = \, \max\left\{r_{qbST},\frac{2}{\frac{|S|+|S^f_1\backslash S|/2}{|S|/2+|S^f_1\backslash S|/2}}\right\} \\
  & = & \max\left\{r_{qbST},\frac{2}{1+\frac{|S|}{|V|}}\right\}\,.
\end{eqnarray*}
\end{proof}

\subsection{Application to the Connected Vertex Cover Problem}

By combining the modified II-algorithm with $\mathcal{A}_{SCVC}$, we obtain the following approximation result.
\begin{theorem}
There is a randomized approximation algorithm with an approximation factor arbitrarily close to $2/(1+\bar{d}/(2\Delta))$ for the Vertex Cover problem in graphs with average degree $\bar{d}$, 
maximum degree $\Delta\leq n/2$ and $\Delta=\Omega(n/\psi(n))$ 
with
running time $n^{O(1)} 2^{O(\psi(n)\log\log\log(n))}$ which 
is
\begin{itemize}
\item polynomial if $\psi (n) = O\left(\frac{\log n}{\log \log \log n}\right)$,
\item quasi-polynomial if $\psi (n) = \polylog n$.
\end{itemize}
\end{theorem}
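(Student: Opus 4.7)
The plan is to treat this statement as essentially the parameterized restatement of Theorem~\ref{thm:mainvc} and prove it by directly rerunning the analysis of the modified II-algorithm from Section~\ref{sec:vc}, now viewing $\psi(n)$ as an explicit parameter that governs both the recursion depth and the overall running time. No new algorithmic idea is needed beyond the one already developed; in particular the \textsc{SCVC} machinery of Section~\ref{sec:cvc} is not invoked in the proof, since the statement as worded concerns plain Vertex Cover.

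First I would reinstantiate the sample size $s$ and the recursion depth $t$ of the modified II-algorithm, decoupling them from the specific values used in Section~\ref{sec:vc}. Since Lemma~\ref{lem:depth} forces $|W_P|\geq (1-o(1))\gamma(G)n$ only when $t=\Theta(n/\Delta)$, and since the hypothesis $\Delta=\Omega(n/\psi(n))$ allows $n/\Delta = O(\psi(n))$, I would set $t:=c\,\psi(n)$ for a sufficiently large constant $c$. I would keep $s$ of order $(\log\log n)^2$, adjusted so that $e^{\sqrt{s}}$ is comfortably larger than $t$; this ensures, via Lemma~\ref{lem:prob}, that along a single path the event $|W\cap C|>(1-1/\sqrt{s})|W|$ holds with probability $(1-1/e^{\sqrt{s}})^t$, which remains bounded below by a positive constant (and can be boosted to high probability by standard repetition).

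Second I would combine this with Lemma~\ref{lem:subset} applied to the constructed $W$: because $|W\cap C|\geq(1-o(1))|W|$, the residual graph $G-W$ admits a 2-approximation that, together with $W$, produces a vertex cover of size arbitrarily close to $\frac{2}{1+\gamma(G)}|C|$. Using $\Delta\leq n/2$, the closed-form expression for $\gamma(G)$ in Section~\ref{sec:vc} collapses to $\gamma(G)=\bar d/(2\Delta)$, giving the claimed approximation factor $\frac{2}{1+\bar d/(2\Delta)}$.

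Third I would read off the running time from the recursion tree, which has size $(s+1)^t$. With $s=O((\log\log n)^2)$ and $t=O(\psi(n))$, this is
\[
n^{O(1)}(s+1)^t \;=\; n^{O(1)}\,2^{O(\psi(n)\log\log\log n)},
\]
polynomial when $\psi(n)=O(\log n/\log\log\log n)$ (since the exponent is then $O(\log n)$), and $2^{\polylog n}$, hence quasi-polynomial, when $\psi(n)=\polylog n$. The one delicate point, and the step I would check most carefully, is the joint calibration of $s$ and $t$: increasing $\psi(n)$ forces $t$ to grow, which in turn shrinks the success probability $(1-1/e^{\sqrt s})^t$, so $s$ has to be retuned (still within $O((\log\log n)^2)$) to keep $e^{\sqrt s}=\omega(t)$. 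This tradeoff is precisely what produces the $\log\log\log n$ factor in the exponent of the running time bound and is the main technical obstacle; everything else is an immediate consequence of the lemmas already established in Section~\ref{sec:vc}.
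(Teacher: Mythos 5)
This statement is the theorem in Section~\ref{sec:cvc} (``Application to the Connected Vertex Cover Problem''); although its wording says ``Vertex Cover problem'' (a slip carried over from Theorem~\ref{thm:mainvc}), the surrounding section, the result tables, and the paper's own proof make clear that it asserts the result for the \emph{Connected} Vertex Cover problem --- that is also why it exists as a separate theorem with the additional quasi-polynomial clause. You explicitly chose the literal reading and declined to use the SCVC machinery, so your argument simply re-runs the Section~\ref{sec:vc} analysis of the modified II-algorithm. As a re-derivation of Theorem~\ref{thm:mainvc} it is essentially sound (your calibration of $s$ against $t=\Theta(\psi(n))$, keeping $e^{\sqrt{s}}=\omega(t)$ with $s=O((\log\log n)^2)$, is if anything slightly more careful than the paper's own discussion, and the quasi-polynomial remark for $\psi(n)=\polylog n$ is immediate), but under that reading the theorem would be a duplicate of Theorem~\ref{thm:mainvc}, and under the intended reading your proof establishes the wrong statement: it never produces a \emph{connected} vertex cover.

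Concretely, the missing content is the paper's two-stage argument. First, one observes that the (modified) II-algorithm's sampling analysis uses only the property that the reference solution $C$ covers all edges, so it can be run against an optimal \emph{connected} vertex cover; with high probability it yields a set $W$ with $|W|\geq \frac{\bar d}{2\Delta}n$ that is, up to a $1-o(1)$ fraction, contained in that optimum. Second, one applies the algorithm $\mathcal{A}_{SCVC}$ to the instance $(G, S=W)$ of the Subset Connected Vertex Cover problem; its guarantee is $\max\left\{r_{qbST},\,\frac{2}{1+|W|/n}\right\}$, and since $\frac{2}{1+\bar d/(2\Delta)}\geq 4/3 > 1.217 \geq r_{qbST}$, the maximum collapses to the claimed factor $\frac{2}{1+\bar d/(2\Delta)}$, while the running time remains dominated by the II-phase, $n^{O(1)}2^{O(\psi(n)\log\log\log n)}$. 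Without steps of this kind (connecting the cover and controlling the cost of doing so), the theorem as the paper intends it is not proved; this is a genuine gap rather than an alternative route.
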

\begin{proof}
It is not hard to see that the original II-algorithm can generate 
with high probability a part $W$ of an optimal connected vertex cover,
since a connected vertex cover is a vertex cover with additional properties
and the II-algorithm only makes use of the necessary requirements of a vertex cover.
Since the modified II-algorithm takes advantage of the same requirements,
we can invoke the modified II-algorithm to generate with high probability a part $W$ of an optimal 
connected vertex cover with cardinality $|W|\geq\frac{\bar{d}}{2\Delta}n$ (cf. Theorem~\ref{thm:mainvc}).

Then, we apply the algorithm $\mathcal{A}_{SCVC}$ to $G$ and $S=W$ to obtain a connected vertex cover of $G$. 
For the considered range of values for $\Delta$, this combined algorithm has an approximation ratio at most 
$\max\left\{r_{qbST}, \frac{2}{1+\left. \frac{\bar{d}}{2\Delta}\right. }\right\}$. But the latter ratio is always at least $4/3$, and thus is 
always greater than $r_{qbST} \leq 1.217$ \, (cf.~\cite{GHNP02}).
\end{proof}


\section{The Subdense Set Cover Problem}
\label{sec:sc}

Recall that a set cover instance consists of a ground set $X$ of size $n$ and a collection $P\subset 2^X$ of subsets of $X$, with $|P|=m$. We wish to find a minimum subset $M\subseteq P$ such that $\bigcup_{S\in M} S = X$. We consider subdense instances of the Set Cover problem, generalizing the dense version of the Set Cover problem proposed by Karpinski and Zelikovsky~\cite{KZ97}. In the $\epsilon$-dense Set Cover problem, every element in $X$ belongs to at least a fraction $\epsilon$ of the subsets in $P$. We call an instance of the Set Cover problem $\psi (n)$-dense whenever every element is contained in at least $m/\psi (n)$ subsets. 

It is well-known that the greedy algorithm provides a $(1 + \ln n)$-approximate solution to the general Set Cover problem, and that a logarithmic approximation is essentially the best achievable, under classical complexity-theoretic assumptions~\cite{F98}. The greedy algorithm picks at every iteration the subset that covers the largest number of remaining (uncovered) elements. We now consider the behavior of this algorithm for the $\psi (n)$-dense instances of set cover.

\begin{lemma}
\label{lem:greedysubdensesc}
The greedy algorithm for $\psi (n)$-dense instances of the Set Cover problem returns a solution of size at most $\psi (n) \ln n$.
\end{lemma}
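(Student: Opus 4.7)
The plan is a standard greedy analysis in which the density condition is used exactly once, to lower bound the number of newly covered elements per iteration.

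First, I would establish the following per-step guarantee: if after some number of greedy iterations there remain $n'$ uncovered elements, then the next subset chosen by greedy covers at least $n'/\psi(n)$ of them. To prove this, I count incidences between the uncovered elements and the subsets in $P$. By the $\psi(n)$-density hypothesis, each of the $n'$ uncovered elements appears in at least $m/\psi(n)$ subsets of $P$, so the total number of such incidences is at least $n' m / \psi(n)$. Averaging over the $m$ subsets, some subset must contain at least $n'/\psi(n)$ uncovered elements; the greedy rule picks one that is at least this good. Note that the density hypothesis is stated for the whole collection $P$, which is exactly what is needed here, so no restricted-instance subtlety arises.

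Next, I would set up the standard recurrence. Let $n_k$ denote the number of uncovered elements after $k$ iterations, with $n_0 = n$. The previous step gives
\[
n_{k+1} \leq n_k \left(1 - \frac{1}{\psi(n)}\right),
\]
so by induction
\[
n_k \leq n \left(1 - \frac{1}{\psi(n)}\right)^{k} \leq n \, e^{-k/\psi(n)}.
\]
Taking $k = \lceil \psi(n) \ln n \rceil$ yields $n_k < 1$, and since $n_k$ is a nonnegative integer this means $n_k = 0$, i.e., all elements are covered. Hence the greedy algorithm terminates after at most $\psi(n) \ln n$ iterations (up to the ceiling), producing a cover of that size.

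There is no real obstacle here; the only thing to watch is to use the density condition against the original collection $P$ (whose size $m$ does not shrink) and to invoke the inequality $1 - x \leq e^{-x}$ so that the logarithmic bound comes out cleanly. No extraneous assumptions on how greedy breaks ties are needed, since the pigeonhole bound holds for any maximizer.
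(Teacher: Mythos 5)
Your proof is correct and takes essentially the same approach as the paper: the density-plus-pigeonhole argument showing each greedy step covers at least a $1/\psi(n)$ fraction of the uncovered elements, followed by the geometric-decay count of iterations. The only cosmetic difference is the final estimate, where you use $1-x\le e^{-x}$ while the paper bounds $\left[\ln\left(\frac{\psi(n)}{\psi(n)-1}\right)\right]^{-1}<\psi(n)$ via an integral comparison (under its standing assumption $\psi(n)\ge 2$); both yield the stated bound.
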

\begin{proof}
At the first iteration, every element in $X$ is contained in at least $m/\psi (n)$ subsets. Hence, by the pigeonhole principle, there exists a subset covering at least $n/\psi (n)$ elements of $X$. After the $i$-th iteration of the algorithm, there are exactly $m-i$ subsets remaining, but every uncovered element is still contained in at least $m/\psi (n) \geq (m-i)/\psi (n)$ remaining subsets, as otherwise it would have been covered in a previous step. Hence, there remains a subset covering at least a fraction $1/\psi (n)$ of the uncovered elements. Thus, we reduce the number of uncovered elements by a factor $1-\frac 1{\psi (n)}$ at every iteration. 

The number of iterations required is therefore at most
$$
\ln n \left[\ln\left(\frac{1}{1-\frac{1}{\psi (n)}}\right)\right]^{-1} .
$$
We assume that $\psi(n)\geq 2$ and proceed by deriving an upper bound on the second term: 
\begin{eqnarray*}
\left[\ln\left(\frac{1}{1-\frac{1}{\psi (n)}}\right)\right]^{-1}
  & = & \left[\ln\left(\frac{\psi (n)}{\psi (n) - 1}\right)\right]^{-1} \\
  & = & \frac{1}{\ln (\psi(n)) - \ln (\psi (n)-1)} \\
  & = & \left( \int_{\psi (n)-1}^{\psi (n)} \frac1x dx \right)^{-1} \\
  & < & \left( (\psi (n) - (\psi (n)-1))\frac1{\psi (n)}\right)^{-1} \\
  & = & \psi (n) ,
\end{eqnarray*}
which proves the lemma.
\end{proof}

From this bound, we know that we can solve the problem exactly by performing an exhaustive search in time $O\left( m^{\ln n \cdot \psi(n)}\right)$. For sufficiently small functions $\psi (n)$, this is subexponential. Hence, we obtain that unless $NP\subseteq DTIME [n^{\log n \cdot \psi(n)}]$, the $\psi(n)$-dense Set Cover problem is not NP-complete. This generalizes the analogous observation for the dense Set Cover problem previously made by Karpinski and Zelikovsky~\cite{KZ97}.


\section{The Subdense Steiner Tree Problem}
\label{sec:st}

In this section, we study the approximability of nondense instances of the Steiner Tree problem. 
In particular, we show that $O(\psi(n))$-dense instances of the Steiner Tree problem 
can be approximated to within $1+\delta$ in time $n^{O(1)}2^{O(\frac{\psi(n)}{\delta})}$.
On the other hand, we show that for every $\delta,\epsilon>0$, the Steiner Tree problem restricted to 
$|V\backslash S|^{1-\delta}$-everywhere dense
graphs is NP-hard to approximate with $\frac{263}{262}-\epsilon$.\\

Karpinski and Zelikovsky \cite{KZ97} introduced the $\epsilon$-dense Steiner Tree problem where an instance
is called $\epsilon$-dense if each terminal $t\in S$ has at least $\epsilon|V\backslash S|$ 
neighbors in $V\backslash S$. They showed that for every $\epsilon>0$ the $\epsilon$-dense
Steiner Tree problem admits a PTAS. On a high level, the scheme finds in every step
a non-terminal which is adjacent to a large fraction of the actual terminal set, thus reducing
the cardinality of the terminal set to a constant after iterating this process. The remaining
instance can be solved by an exact algorithm in polynomial time. Hauptmann~\cite{H07} showed that
the same scheme actually yields even an EPTAS for the $\epsilon$-dense Steiner Tree problem.\\

An instance of the Steiner Tree problem on graphs consisting of $G=(V,E)$  
and terminal set $S$ is called $\psi(n)$-dense if each terminal has at least $\frac{|V\backslash S|}{\psi(n)}$
neighbors in $|V\backslash S|$. 

We show that the Karpinski-Zelikovsky-techniques can be extended to subdense cases and prove the following
\begin{theorem}\label{thm:stpmain}
$\psi(n)$-dense instances of the Steiner Tree problem 
can be approximated to within $1+\delta$ in time $n^{O(1)}2^{O(\frac{\psi(n)}{\delta})}$.
\end{theorem}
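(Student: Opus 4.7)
The plan is to adapt the Karpinski--Zelikovsky PTAS for $\epsilon$-dense Steiner Tree to the subdense regime by instantiating their density parameter with $\epsilon=1/\psi(n)$ and tuning the stopping threshold of the greedy phase so that the residual instance fits the allotted time budget while the greedy overhead remains at most a $\delta$-fraction of the optimum.

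I would describe a two-phase algorithm. If $|S|\leq c\psi(n)/\delta$ for a suitable constant $c$, invoke Dreyfus--Wagner directly in $3^{|S|}\mathrm{poly}(n) = 2^{O(\psi(n)/\delta)}\mathrm{poly}(n)$ time to return an optimal Steiner tree. Otherwise, run a greedy phase: a pigeonhole/double-counting argument on the terminal--nonterminal bipartite edges guaranteed by $\psi(n)$-density produces, at each iteration, a nonterminal $v$ adjacent to at least $|S_i|/\psi(n)$ currently uncovered terminals (with a minor adjustment to account for previously-picked nonterminals). Add the star from $v$ to its terminal neighbors to the partial tree, mark those terminals as covered, and continue until the number of uncovered terminals drops to at most $c\psi(n)/\delta$; then run Dreyfus--Wagner on the instance obtained by contracting each greedy cluster into a super-terminal.

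The analysis splits into two ingredients. First, edge counting gives greedy cost $\sum_i|T_{v_i}| = |S_0|-|S_k|+k$, where $k$ is the number of iterations. The recurrence $|S_i|\leq (1-1/\psi(n))|S_{i-1}|+1$ yields $k = O(\psi(n)\log(|S_0|\delta/\psi(n)))$, and in the non-trivial branch $|S_0|\geq c\psi(n)/\delta$ a direct calculation bounds $k\leq\delta(|S_0|-1)$. Second, I would prove the key inequality $\mathrm{OPT}_k\leq\mathrm{OPT}-(|S_0|-|S_k|)$ by projecting the optimum $T^*$ into the contracted graph: each cluster $C_j$ collapses to its super-terminal, the image of $T^*$ is a connected multigraph spanning all super-terminals together with the set $W := V(T^*)\setminus\bigcup_j C_j$ of uncontracted $T^*$-Steiner vertices, and any spanning tree of its underlying simple graph uses only $|S_k|+|W|-1$ edges; combining with $|V(T^*)|\geq|S_0|+|W|$ gives the claim. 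Summing both contributions yields total cost at most $\mathrm{OPT}+k\leq(1+\delta)\mathrm{OPT}$, and the running time is dominated by the exact phase on $O(\psi(n)/\delta)$ terminals, giving $n^{O(1)}2^{O(\psi(n)/\delta)}$.

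The main obstacle is establishing the key inequality $\mathrm{OPT}_k\leq\mathrm{OPT}-(|S_0|-|S_k|)$: a cluster may contain several $T^*$-Steiner points and may be visited by $T^*$ in several disconnected pieces, so one cannot charge a saving of exactly one edge per absorbed terminal to $T^*$ directly. The image-multigraph / spanning-tree argument above circumvents this by counting vertices rather than edges. A secondary subtlety I need to handle is that successive contractions can shrink the effective density seen by the greedy step, which I address by noting that each greedy iteration removes at most one nonterminal from the bipartite neighborhood, so that the effective density in the contracted instance remains $\Omega(1/\psi(n))$ throughout the $k$ iterations we use.
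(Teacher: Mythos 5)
Your algorithm is the same as the paper's $\mathcal{M}_{DSTP}$: greedily pick a nonterminal adjacent to a large fraction of the current (contracted) terminal set, contract its star into a super-terminal, stop when $O(\psi(n)/\delta)$ terminals remain, and finish with Dreyfus--Wagner; the iteration count $k=O(\psi(n)\log(|S_0|\delta/\psi(n)))\le\delta(|S_0|-1)\le\delta\,\mathrm{OPT}$ matches the paper's bound on $|N|$. Where you genuinely diverge is the final charging. The paper compares the output to an optimum in an \emph{augmented} graph $G'$ (adding a spanning tree on each star minus its center) and bounds the ratio by $1+|N|/|S|$; you instead prove the transfer inequality $\mathrm{OPT}_k\le\mathrm{OPT}-(|S_0|-|S_k|)$ by contracting $T^*$ itself: every final cluster contains an original terminal, so the image of $T^*$ is connected and spans all of $S_k$ together with the uncontracted nonterminal vertices of $T^*$, a spanning tree of that image certifies $\mathrm{OPT}_k\le|S_k|+|W|-1$, and $|V(T^*)|\ge|S_0|+|W|$ closes the argument, giving total cost $\le\mathrm{OPT}+k\le(1+\delta)\mathrm{OPT}$. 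This is a clean and arguably more transparent accounting (it charges an additive $+1$ per picked nonterminal against $\mathrm{OPT}\ge|S_0|-1$, rather than manipulating a ratio of sums), and it is correct provided $W$ is taken to be the \emph{nonterminal} $T^*$-vertices outside the clusters; with your literal definition $W:=V(T^*)\setminus\bigcup_jC_j$ the still-uncovered original terminals are counted twice, and then $|V(T^*)|\ge|S_0|+|W|$ fails in general (fix: either exclude them from $W$, or bound $\mathrm{OPT}_k$ by $|S_k\cup W|-1$).

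The one place where your sketch is thinner than what is actually needed is the density-maintenance claim. Saying that each iteration ``removes at most one nonterminal from the bipartite neighborhood'' is only enough if you account for the losses \emph{in total}: the paper's contraction-tree lemma shows that after $i$ iterations each current super-terminal has lost at most the height of its contraction tree many nonterminal neighbors and these heights sum to at most $i$, so $|E(S_1^i,N^i)|\ge|S_1^i|\frac{|V\setminus S|}{\psi(n)}-i$ and the pigeonhole gives a star of size $\ge|S_1^i|/\psi(n)$ whenever $|S_1^i|\ge\psi(n)$. If instead you only use a per-terminal loss of $i$ (degree $\ge\frac{|V\setminus S|}{\psi(n)}-i$ for every current terminal), the pigeonhole needs $i\lesssim|V\setminus S|/\psi(n)$, which is not guaranteed when $|V\setminus S|$ is small compared to $\psi(n)^2\log n$. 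So you have identified the right mechanism, but to make the lemma hold in all parameter regimes you need the aggregate (contraction-tree) bookkeeping rather than the per-terminal one.
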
 
By setting $\psi(n)$ to $O(\log n)$, we obtain the existence of PTAS for subdense instances of 
the Steiner Tree problem where every terminal has at least $\Omega(\frac{|V\backslash S|}{\log n})$
neighbors in $V\backslash S$.
\begin{corollary}
There exists a PTAS for $O(\log n)$-dense instances of the Steiner Tree problem. 
\end{corollary}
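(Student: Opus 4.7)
My plan is to extend the Karpinski--Zelikovsky greedy-contraction PTAS for $\eps$-dense Steiner Tree to the $\psi(n)$-dense regime. The algorithm has two phases: a polynomial-time greedy star reduction that brings the terminal set down to $k = \Theta(\psi(n)/\delta)$ terminals, followed by an exact Dreyfus--Wagner computation on the reduced instance. Since Dreyfus--Wagner runs in time $2^{O(k)} \cdot \mathrm{poly}(n)$, this yields the target running time $n^{O(1)} 2^{O(\psi(n)/\delta)}$.

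In the greedy phase, at iteration $i$ I identify a non-terminal $v \in V \setminus S$ maximizing $|N(v) \cap S_i|$, where $S_i$ is the current terminal set. By the $\psi(n)$-density, each terminal has at least $|V \setminus S|/\psi(n)$ non-terminal neighbors, and an averaging argument (which carries over to the contracted graph since super-terminals inherit the non-terminal neighborhoods of their constituents) gives a non-terminal with $|T_v| := |N(v) \cap S_i| \geq |S_i|/\psi(n)$. I add $v$ together with the edges from $v$ to every vertex of $T_v$ to the partial solution, and contract $\{v\} \cup T_v$ into a single super-terminal. The recurrence $|S_{i+1}| \leq |S_i|(1 - 1/\psi(n)) + 1$ gives $|S_i| \leq k$ after $t = O(\psi(n)\ln(|S|/k))$ iterations, which is polynomial. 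The star phase contributes $\sum_i |T_{v_i}| = |S| - k + t$ edges to the final solution; if $|S| \leq k$ to start with, I skip the greedy phase and run Dreyfus--Wagner directly.

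For the approximation guarantee I adapt the Karpinski--Zelikovsky charging argument. Let $c^{(k)}$ denote the optimal Steiner Tree cost on the reduced instance, so my algorithm outputs a tree of cost $(|S| - k + t) + c^{(k)}$. The heart of the proof is to establish $\mathrm{ALG} \leq \mathrm{OPT} + O(t)$; combined with $\mathrm{OPT} \geq |S| - 1$ and the elementary estimate $t/|S| = O(\delta \ln x / x)$ for $x := |S|\delta/\psi(n) \geq 1$ (which is bounded by a constant multiple of $\delta$ since $\ln x / x \leq 1/e$), this gives $\mathrm{ALG}/\mathrm{OPT} \leq 1 + O(\delta)$, and rescaling $\delta$ by a constant yields the claimed $1+\delta$ factor. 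I expect the main obstacle to be precisely the bound $\mathrm{ALG} \leq \mathrm{OPT} + O(t)$: one has to argue, in the spirit of~\cite{KZ97}, that $c^{(k)} \leq \mathrm{OPT} - (|S| - k) + O(t)$, i.e., that the $|S| - k + t$ star edges of the greedy phase essentially substitute for the $|S| - k$ terminal-incident edges $\mathrm{OPT}$ must contain, leaving only an $O(t)$ overhead. This requires a careful structural analysis of how the greedy contractions interact with the edges of $\mathrm{OPT}$, ensuring that the self-loops produced by contracting in $\mathrm{OPT}$ account for the "missing" edges up to an additive $O(t)$ term.
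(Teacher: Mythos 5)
Your overall plan coincides with the paper's: greedily pick non-terminals covering a large fraction of the current (contracted) terminal set until at most $s(\delta,\psi(n)) = \Theta(\psi(n)/\delta)$ terminals remain, then run Dreyfus--Wagner, giving running time $n^{O(1)}2^{O(\psi(n)/\delta)}$, which is $n^{O(1/\delta)}$ for $\psi(n)=O(\log n)$ and hence a PTAS. However, your proposal has a genuine gap exactly where you say you ``expect the main obstacle'': the bound $\mathrm{ALG} \leq \mathrm{OPT} + O(t)$, i.e.\ $c^{(k)} \leq \mathrm{OPT} - (|S|-k) + O(t)$, is never proven, only described as requiring ``a careful structural analysis.'' Without it there is no approximation guarantee at all, and this is the heart of the theorem the corollary rests on. The paper closes this gap not by charging star edges against terminal-incident edges of $\mathrm{OPT}$ (the route you sketch, with its delicate self-loop accounting), but more cleanly: it augments $G$ to $G'$ by adding, for each picked non-terminal $v$, a spanning tree $T_v$ on $S(v)\setminus\{v\}$; then $c(T_{OPT}) \geq c(T'_{OPT})$ and $c(T'_{OPT}) = c(T_{D\&W}) + \sum_{v\in N} c(T_v)$, which immediately yields a ratio of at most $1 + |N|/|S|$, and the threshold $s(\delta,\psi(n))$ together with the bound $|N| \leq 2\psi(n)\ln\bigl(|S|/s(\delta,\psi(n))\bigr)$ gives $1+\delta$.

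A second, smaller gap is your parenthetical claim that the averaging argument ``carries over to the contracted graph since super-terminals inherit the non-terminal neighborhoods of their constituents.'' They do not inherit them fully: each contraction absorbs one non-terminal, so a super-terminal built through repeated contractions can lose up to one non-terminal neighbor per contraction, and the picked non-terminals are also removed from the pool over which you average. The paper devotes its key Lemma (via the contraction-tree construction) to showing that the total loss over all super-terminals is at most the number $i$ of iterations, while the non-terminal pool shrinks by the same $i$, so the pigeonhole bound $|N(v)\cap S_1| \geq |S_1|/\psi(n)$ still holds (and, after accounting for the $+1$ super-terminal added back, a net decrease of $|S_1|/(2\psi(n))$ per iteration, using $|S_1|\geq 2\psi(n)$). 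Your recurrence $|S_{i+1}|\leq |S_i|(1-1/\psi(n))+1$ and the final arithmetic $t/|S| = O(\delta\ln x/x)$ are fine, but both rest on the two unproven steps above.
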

By setting $\psi(n)$ to $\polylog(n)$, it yields quasi-polynomial time approximation schemes.
\begin{corollary}
There exist quasi-polynomial time approximation schemes (QPTAS) for $\polylog(n)$-dense instances of the 
Steiner Tree problem.
\end{corollary}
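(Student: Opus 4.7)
The plan is to extend the Karpinski-Zelikovsky iterative contraction PTAS to the subdense regime, replacing the fixed density parameter $\eps$ by the function $1/\psi(n)$ and rebalancing the termination threshold so that the exponential blowup of the subsequent exact-solve phase scales as $2^{O(\psi(n)/\delta)}$. The PTAS and QPTAS corollaries will then be immediate by substituting $\psi(n)=O(\log n)$ and $\psi(n)=\polylog(n)$ into the bound $n^{O(1)}2^{O(\psi(n)/\delta)}$.

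First, I would show via a simple averaging argument that at each iteration $i$ of the contraction loop there exists a nonterminal $v_i\in V_i\setminus S_i$ adjacent to at least $|S_i|/\psi(n)-O(i)$ terminals. Summing terminal-to-nonterminal edges gives at least $|S_i|\cdot(|V\setminus S|/\psi(n)-i)$ such edges, where the subtractive $i$ accounts for nonterminal neighbors lost to previous contractions. I would add $v_i$ to the partial solution and contract $v_i$ together with all of its terminal neighbors into a single super-terminal, producing a reduced instance satisfying the recurrence $|S_{i+1}|\leq (1-1/\psi(n))|S_i|+1$, so that $|S_K|\leq |S_0|(1-1/\psi(n))^K+\psi(n)$.

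Next, I would terminate after $K=O(\psi(n)\log(|S_0|\delta/\psi(n)))$ iterations, at which point $|S_K|\leq T:=c\psi(n)/\delta$, and invoke a Dreyfus-Wagner exact solver on the residual instance in time $3^{T}n^{O(1)}=n^{O(1)}2^{O(\psi(n)/\delta)}$. A standard contraction argument will show $OPT_{\mathrm{red}}\leq OPT$: the original optimal Steiner tree, when its contracted vertices are identified, projects to a feasible Steiner tree in the reduced instance whose cost does not exceed the original. Hence $ALG\leq K+OPT_{\mathrm{red}}\leq K+OPT$, and combined with the trivial lower bound $OPT\geq |S_0|-1$ a short calculation yields $K/OPT\leq\delta$ once $|S_0|\geq c'\psi(n)/\delta$ for a suitable constant $c'$; if instead $|S_0|\leq c'\psi(n)/\delta$ the whole instance already fits below the threshold $T$ and is solved exactly within the same budget.

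The hard part will be controlling the degradation of the density bound under repeated contractions: an original terminal may lose up to $K$ nonterminal neighbors over the course of the loop, so the effective density decays to roughly $(|V\setminus S|/\psi(n)-K)/(|V\setminus S|-K)$. For the choice $K=O(\psi(n)\log|S_0|)$ this still guarantees a nonterminal with $\Omega(|S_i|/\psi(n))$ terminal neighbors provided $|V\setminus S|=\Omega(\psi(n)^2\log|S_0|)$; when this regularity fails the instance is itself small enough to be folded into the exact Dreyfus-Wagner phase without exceeding the stated running time.
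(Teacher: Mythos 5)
Your route is the paper's own: greedily pick a nonterminal with many neighbors in the current terminal set, contract the star, stop once $O(\psi(n)/\delta)$ terminals remain, finish with Dreyfus--Wagner, and read off the QPTAS by substituting $\psi(n)=\polylog n$ into $n^{O(1)}2^{O(\psi(n)/\delta)}$. However, two steps as written do not go through. First, your ratio accounting mixes two cost measures. Under the tree-size objective (the one the statement is about), $ALG\leq K+OPT_{\mathrm{red}}$ is false: your solution also pays for every terminal swallowed by a star, i.e. $ALG=OPT_{\mathrm{red}}+\sum_i\bigl(|S(v_i)|-1\bigr)$, and $\sum_i\bigl(|S(v_i)|-1\bigr)\approx|S_0|-|S_K|+K$, not $K$. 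Under the ``number of Steiner vertices'' measure, where $ALG\leq K+OPT_{\mathrm{red}}$ does hold, the lower bound $OPT\geq|S_0|-1$ fails (an optimum may use almost no Steiner vertices), so $K/OPT\leq\delta$ does not follow. To get $ALG\leq OPT+K$ one must also charge $OPT$ for connecting the terminals inside each contracted star; the paper does this by adding a spanning tree on each set $S(v)\setminus\{v\}$ to the graph and showing the optimum of the augmented graph is at most $OPT$ and decomposes as $c(T_{D\&W})+\sum_{v}\bigl(|S(v)\setminus\{v\}|-1\bigr)$. That charging argument is the missing piece; $OPT_{\mathrm{red}}\leq OPT$ alone is not enough.

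Second, your density maintenance is weaker than needed and the fallback is flawed. Charging each terminal a loss of up to $K$ nonterminal neighbors forces the side condition $|V\setminus S|=\Omega\bigl(\psi(n)^2\log|S_0|\bigr)$, and when it fails you cannot ``fold the instance into the Dreyfus--Wagner phase'': Dreyfus--Wagner is exponential in the number of \emph{terminals}, which may be $\Theta(n)$ even when $|V\setminus S|$ is tiny. The natural repair, enumerating the $2^{|V\setminus S|}$ subsets of nonterminals, costs $2^{O(\psi(n)^2\log n)}$, which breaks the claimed $n^{O(1)}2^{O(\psi(n)/\delta)}$ bound (it happens to remain quasi-polynomial for $\psi(n)=\polylog n$, so this corollary could be salvaged, but not your general time bound, nor a PTAS for $\psi(n)=\Theta(\log n)$). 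The paper avoids the case split entirely by a sharper accounting: an uncontracted original terminal never loses a nonterminal neighbor (any contraction touching its neighborhood absorbs it), and a super-terminal loses at most the height of its contraction tree, so the \emph{total} loss over all current terminals after $i$ iterations is at most $i$, not $i$ per terminal. This yields a nonterminal with at least $|S_1^i|/\psi(n)$ neighbors in $S_1^i$ (and $|S_1^i|/(2\psi(n))$ net of the added super-terminal) with no assumption on $|V\setminus S|$, which is what you should prove instead.
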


\begin{figure}
  \fbox{\hspace*{.4cm}
    \begin{minipage}{\textwidth - .9cm}
      {\bf Algorithm $\mathcal{M}_{DSTP}$}\\[1.5ex]
      {\bf Input:} $\psi(n)$-dense instance $G=(V,E)$, $S\subseteq V$ and $\delta>0$\\[1ex]
      \ding{192} 
       $S_1:=S$\\
      \hspace*{.5cm} {\bf while} $|S_1|>s(\delta,\psi):=\max\{\frac{2}{\delta},2\}\cdot \psi(n)$\\
      \hspace*{1cm}  $v:=\text{argmax}\left\{|N(u)\cap S_1|,u\in V\backslash (S\cup N)\right\}$\\
      \hspace*{1cm}  $S(v):=\text{ be the star consisting of } v \text{ and } N(v)\cap S_1$\\
      \hspace*{1cm}  $N:=N\cup\{v\}$ and contract $S(v)$ into $s_v$\\
      \hspace*{1cm}  $S_1:=(S_1\backslash N(v))\cup\{s_v\}$\\
      \hspace*{.5cm} {\bf endwhile}\\
      \ding{193}
       Apply Dreyfus-Wagner algorithm to remaining instance to obtain $T_{D\&W}$ \\[1ex]
      {\bf Return} $T_{\delta}:=T_{D\&W}\cup\bigcup_{v\in N}S(v)$\\
      \medskip
    \end{minipage}
  }
  \caption{Modified DSTP algorithm.}
\end{figure}

\paragraph{Analysis of the Algorithm $\mathcal{M}_{DSTP}$\\}
First of all, we will prove a lemma of significant importance to our analysis
and proof of our main result.
It shows that we can extract a large portion of the actual terminal set $S_1$. 

\begin{lemma}\label{lemma:extract}
In every iteration of the phase \ding{192}, the cardinality of the extracted neighborhood contained in $S_1$
can be lower bounded by $|N(v)\cap S_1|\geq \frac{|S_1|}{2\psi(n)}$.
\end{lemma}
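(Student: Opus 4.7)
The plan is to prove Lemma \ref{lemma:extract} by a double-counting argument applied to the bipartite structure between the current terminal set $S_1$ and the candidate pool $V\setminus(S\cup N)$ in the (partially contracted) graph at the start of the iteration in question.

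First I would establish an inheritance property: every $t\in S_1$ absorbs at least one original terminal $r(t)\in S$. If $t\in S$ this is trivial; otherwise $t=s_{v_i}$ was created in an earlier iteration from a star $S(v_i)$, and unrolling the recursion produces an original terminal lying inside $S(v_i)$. Because phase \ding{192} alters the non-terminal set only by moving selected centers into $N$, every original non-terminal neighbor of $r(t)$ that is not in $N$ remains, in the current graph, adjacent to the vertex $t$ into which $r(t)$ has been contracted.

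Next I would do the count. The $\psi(n)$-dense hypothesis gives each $r(t)$ at least $|V\setminus S|/\psi(n)$ non-terminal neighbors in the original graph, and at most $|N|$ of these can lie in $N$. Summing over $t\in S_1$, the number of edges from $S_1$ to $V\setminus(S\cup N)$ in the current graph is at least $|S_1|\bigl(|V\setminus S|/\psi(n)-|N|\bigr)$; since $|V\setminus(S\cup N)|=|V\setminus S|-|N|$, averaging yields a candidate $v$ with
\[
|N(v)\cap S_1|\;\geq\;\frac{|S_1|\bigl(|V\setminus S|/\psi(n)-|N|\bigr)}{|V\setminus S|-|N|},
\]
and a short manipulation shows this is at least $|S_1|/(2\psi(n))$ provided $|N|\leq|V\setminus S|/(2\psi(n)-1)$.

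The remaining step is to verify this invariant on $|N|$ throughout phase \ding{192}, which I would handle by strong induction on the iteration index. Granted the lemma at every earlier iteration, $|S_1|$ shrinks by a factor of $1-\Theta(1/\psi(n))$ per step (the $+1$ from appending $s_v$ is negligible once $|S_1|>s(\delta,\psi)\geq 2\psi(n)$), so the while-loop executes at most $O(\psi(n)\log|S|)$ times; in the subdense regime $\psi(n)=O(\polylog n)$ covered by Theorem \ref{thm:stpmain}, this count is comfortably below $|V\setminus S|/(2\psi(n)-1)$. The main obstacle is exactly this circularity—the lemma gives the iteration bound that yields the bound on $|N|$ that is needed for the lemma—and packaging both statements into a single inductive invariant resolves it.
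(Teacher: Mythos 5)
Your counting scheme charges every current terminal a loss of $|N|$ non-terminal neighbors, which forces you to maintain the invariant $|N|\leq |V\setminus S|/(2\psi(n)-1)$ — and that invariant is simply false in general, not merely circular to establish. Phase \ding{192} needs roughly $2\psi(n)\ln\bigl(|S|/s(\delta,\psi)\bigr)$ iterations, so your invariant requires $|V\setminus S|=\Omega\bigl(\psi(n)^2\log|S|\bigr)$, which is not implied by $\psi(n)$-density (and Theorem~\ref{thm:stpmain} is not restricted to $\psi(n)=\polylog n$ in any case). For example, with $|V\setminus S|=\Theta(\psi(n)^2)$ and $|S|$ of order $n$, the density hypothesis only demands $\psi(n)$ non-terminal neighbors per terminal; after just two iterations your edge lower bound $|S_1|\bigl(|V\setminus S|/\psi(n)-|N|\bigr)$ is already non-positive, while the statement of the lemma is still true there. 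So the inductive packaging at the end cannot resolve the difficulty: the gap is in the per-terminal accounting, not in the bookkeeping of the induction.

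The paper's proof repairs exactly this point by an amortized argument. It introduces a contraction tree for each contracted terminal and proves that a node of height $h$ loses at most $h$ non-terminal neighbors (each contraction absorbs only the single chosen center), and that the heights over all current nodes sum to at most $i$, the number of iterations so far, because each center is an inner node of exactly one contraction tree. Hence the total loss after $i$ iterations is at most $i$, not $|S_1|\cdot|N|$, giving $|E(S_1^i,N^i)|\geq |S_1^i|\,|V\setminus S|/\psi(n)-i$ against a candidate pool of size $|V\setminus S|-i$; the averaging step then uses only $|S_1^i|\geq\psi(n)$ (guaranteed by the while-condition $|S_1|>s(\delta,\psi)\geq 2\psi(n)$) to get $\frac{|S_1^i|\,|V\setminus S|/\psi(n)-i}{|V\setminus S|-i}\geq \frac{|S_1^i|}{\psi(n)}$ for every $i$ with $|N^i|>0$, with no constraint whatsoever on $i$ relative to $|V\setminus S|$. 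If you want to salvage your write-up, you need this sharper inheritance statement (loss of one neighbor per contraction in a terminal's own history, with the histories of distinct current terminals being disjoint), after which the factor $2$ is only needed, as in the paper, to absorb the $+1$ from re-inserting $s_v$ into $S_1$.
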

\begin{proof}
We define $S^i_1$ as the set $S_1$ after the $i$-th iteration of phase \ding{192},
$S(v_i)$ be the star picked in the $i$-th iteration and $N^i$ be the set of non-terminals after
the $i$-th iteration.
In addition, we introduce a data structure called \emph{contraction tree},
 which will be one of the main tools in our proof.
 
\begin{definition}[Contraction Tree]
 Given a terminal $s_{v_j}\in S^i_1\backslash S$ with $i>0$, 
we define recursively the contraction tree $T(s_{v_j})$. The root of $T(s_{v_j})$ 
consists of the node $s_{v_j}$ and the child nodes of $s_{v_j}$ are defined as $c\in S^{j}_1$ where $j<i$ and $S^j_1$ is the set that was contracted into $s_{v_j}$. 
All nodes of $T(s_{v_j})$ contained in $S$ are leaf nodes. The subtree of $T(s_{v_j})$ at 
$c\in  S^{j}_1\backslash S$ is given by $T(c)$.
\end{definition}

  First of all, we prove a simple fact concerning the number of non-terminal neighbors of nodes in $S_1^i\bs S$. 
\begin{fact}
Given a node $s_{v_j}\in S^i_1\backslash S$, the number of non-terminal neighbors of
$s_{v_j}$ can be lower bounded by $\frac{|V\backslash S|}{\psi(n)}-height(T(s_{v_j}))$.
\end{fact}
\begin{proof}
Let $s_{v_k}\in V(T(s_{v_j}))\bs S$ be a inner node of $T(s_{v_j})$ and
 $S_1^k$ be the set which was contracted into $s_{v_k}$.
Let $s'\in S_1^k$ be the node with $d$ non-terminal neighbors where 
$d=\min_{s\in S_1^k}\{\#\textrm{non-terminal neighbors of }s\}$.
 Since every contraction involves only one non-terminal,
  the number of non-terminal neighbors of the parent node $s_{v_k}$ of $s'$ is 
at least $d-1$. Therefore, we can conclude that the root $s_{v_j}$ must have at least 
  $\frac{|V\backslash S|}{\psi(n)}-height(T(s_{v_j}))$ non-terminal neighbors.
\end{proof}
Clearly, the remaining non-contracted terminals still have at least $\frac{|V\backslash S|}{\psi(n)}$ neighbors.
 Therefore, the number of edges between $S_1^i$ and $N^i$ denoted as 
$|E(S_1^i,N^i)|$ can be lower bounded by
\begin{eqnarray*}
|E(S_1^i,N^i)|&\geq& (|S^i_1\cap S|)\frac{|V\backslash S|}{\psi(n)}+
\sum_{t\in S^i_1\backslash S} |N(t)\cap (N^i)| \\
&\geq& (|S^i_1|-|S^i_1\backslash S|)\frac{|V\backslash S|}{\psi(n)}+
\sum_{t\in S^i_1\backslash S}\frac{|V\backslash S|}{\psi(n)}-height(T(t))\\
&\geq & |S^i_1|\frac{|V\backslash S|}{\psi(n)}-\sum_{t\in S^i_1\backslash S}height(T(t))\\
& \geq & |S^i_1|\frac{|V\backslash S|}{\psi(n)}-\sum_{t\in S^i_1\backslash S}|\{s_l\mid 
s_l \textrm{ is a inner node of } T(t)\}|
\end{eqnarray*}  
Since the inner nodes of the contraction tree corresponds to contractions and therefore,
corresponds to iterations, we get $$\sum_{t\in S^i_1\backslash S}|\{s_l\mid 
s_l \textrm{ is a inner node of } T(t)\}|\leq i \textrm{ and }
|E(S_1^i,N^i)|\geq  |S^i_1|\frac{|V\backslash S|}{\psi(n)}-i.$$\\
After iteration $i$ of the phase \ding{192}, we still have $|N^i|=|V\backslash S|-i$ non-terminals. We can assume that
$|N^i|>0$ since otherwise we are done. Recall that $|S_1|\geq 2\psi(n)$. By pigeonhole principle,
we know there must be a $v\in N^i$ with the following number of neighbors in $S^i_1$:
$$\frac{|E(S_1^i,N^i)|}{|N^i|}\geq \frac{|S^i_1|\frac{|V\backslash S|}{\psi(n)}-i}{|V\backslash S|-i}\geq 
\frac{|S^i_1|\frac{|V\backslash S|}{\psi(n)}-i\frac{|S^i_1|}{\psi(n)}}{|V\backslash S|-i} \geq \frac{|S^i_1|}{\psi(n)}$$
Thus, we can pick in every iteration at least $\frac{|S_1|}{\psi(n)}$ terminals.

But since we contract a star $S(v)$ into $s_v$ and add $s_v$ to $S_1$, we decreased $S_1$ only by
$\frac{S_1}{\psi(n)}-1$. Alternatively, this can be seen as a greedy pick with a slightly worse density 
$c\cdot \psi(n)$. In other words, we want to show that there is a $c>1$ such that
$\frac{S_1}{\psi(n)}-1\geq \frac{S_1}{c\cdot\psi(n)}$. After short calculation, we get 
$|S_1|(c-1)\geq c\cdot \psi(n)$. Since the cardinality of $S_1$ is at least $2\cdot \psi(n)$,
we obtain $\psi(n) \geq \frac{c}{c-1}\psi(n)$, which can be ensured by $c:=2$.
By combining these two arguments, we conclude
 that we can reduce the set $S_1$ by $\frac{|S_1|}{2\psi(n)}$ in every iteration.
\end{proof}
 
The next lemma deals with the upper bound on the cardinality of the set $N$
at the beginning of phase \ding{193}.

\begin{lemma}
The cardinality of the set $N$ at the end of phase \ding{192} 
can be upper bounded by 
$2\ln(\frac{|S|}{s(\delta,\psi(n))})\cdot\psi(n)$.
\end{lemma}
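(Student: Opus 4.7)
The plan is to combine the per-iteration reduction established in the previous lemma with a standard geometric-decay argument. Since phase \ding{192} adds exactly one non-terminal to $N$ per iteration, bounding $|N|$ is equivalent to bounding the number of while-loop iterations executed before the terminating condition $|S_1| \leq s(\delta,\psi(n))$ is met.

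First I would set up the recurrence. Let $S_1^i$ denote the set $S_1$ after the $i$-th iteration, with $S_1^0 = S$. Lemma~\ref{lemma:extract} says that each iteration strictly shrinks $S_1$ by at least $\frac{|S_1|}{2\psi(n)}$ (this net bound already accounts for reintroducing the contracted terminal $s_v$, which is precisely why the constant $2$ appears and why the loop-termination condition $|S_1| > \max\{2/\delta,2\}\cdot\psi(n) \geq 2\psi(n)$ is exactly the regime where the lemma applies). Hence
\[
|S_1^{i+1}| \;\leq\; |S_1^i|\Bigl(1 - \tfrac{1}{2\psi(n)}\Bigr),
\]
and iterating gives $|S_1^k| \leq |S|\bigl(1 - \frac{1}{2\psi(n)}\bigr)^k$ for every $k$ for which the while loop is still active.

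Next I would turn the multiplicative decay into an iteration count using the standard inequality $1-x \leq e^{-x}$, valid for all $x \in [0,1]$. This yields $|S_1^k| \leq |S|\cdot e^{-k/(2\psi(n))}$. The while loop certainly terminates once the right-hand side drops to $s(\delta,\psi(n))$, i.e.\ as soon as
\[
k \;\geq\; 2\psi(n)\,\ln\!\Bigl(\tfrac{|S|}{s(\delta,\psi(n))}\Bigr).
\]
Let $k^\ast$ be the total number of iterations executed; since each iteration before termination enlarges $N$ by one and the algorithm stops at the first violation of $|S_1| > s(\delta,\psi(n))$, we conclude $|N| = k^\ast \leq 2\psi(n)\ln\bigl(\frac{|S|}{s(\delta,\psi(n))}\bigr)$, which is the claimed bound.

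The only technical point worth emphasizing is that the factor $2$ inside the reduction rate, inherited from Lemma~\ref{lemma:extract}, is exactly what pulls out as the constant $2$ in front of the logarithm; no sharper manipulation is needed. I do not expect any genuine obstacle beyond making sure the termination condition is formally compatible with the hypothesis $|S_1| \geq 2\psi(n)$ required by Lemma~\ref{lemma:extract}, which is immediate from the definition $s(\delta,\psi(n)) = \max\{2/\delta,2\}\cdot\psi(n)$.
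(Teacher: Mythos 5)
Your proof is correct and follows essentially the same route as the paper: the per-iteration shrinkage from Lemma~\ref{lemma:extract} gives $|S_1^i|\le\bigl(1-\tfrac{1}{2\psi(n)}\bigr)^i|S|$, which is then turned into an iteration count (and hence a bound on $|N|$, one non-terminal per iteration) of $2\psi(n)\ln\bigl(\tfrac{|S|}{s(\delta,\psi(n))}\bigr)$. The only cosmetic difference is that you convert the geometric decay via $1-x\le e^{-x}$, whereas the paper reuses the estimate $\bigl[\ln\bigl(\tfrac{1}{1-\frac{1}{2\psi(n)}}\bigr)\bigr]^{-1}<2\psi(n)$ from Lemma~\ref{lem:greedysubdensesc}; both give the same constant.
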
 

\begin{proof}
According to Lemma~\ref{lemma:extract} , we can pick in every iteration at least $\frac{|S_1|}{2\psi(n)}$
terminals.
If we define $S^i_1$ as the set $S_1$ after the $i$-th iteration of the phase \ding{192},
we see that $|S^i_1|\leq \left(1-\frac{1}{2\psi(n)}\right)^i|S|$. In order to decrease $|S^i_1|$
to $s(\delta, \psi(n))$, we have to iterate the first phase at least 
$\ln\left(\frac{|S|}{s(\delta, \psi(n))}\right)\left[\ln\left(\frac{1}{1-\frac{1}{2\cdot\psi(n)}}\right)\right]^{-1}$ times. 
By applying lemma~\ref{lem:greedysubdensesc}, we obtain the following inequality:
$$|N|=\ln\left(\frac{|S|}{s(\delta, \psi(n))}\right)\left[\ln\left(\frac{1}{1-\frac{1}{2\cdot\psi(n)}}\right)\right]^{-1}< 
\ln\left(\frac{|S|}{s(\delta, \psi(n))}\right)2\cdot \psi(n)$$
\end{proof}

We are ready to prove our main Theorem~\ref{thm:stpmain}.
\begin{proof}
Let $T_{OPT}$ be an optimal Steiner Tree for $S$ in $G$ and $T_2$ be the optimal
steiner tree for $S_1$ computed in the phase \ding{193}. For every $v\in N$, we add a set $E(T_v)$ of 
edges  to $E(G)$ that form a spanning tree $T_v$ for $S(v)\backslash\{v\}$. Let 
$G':=(V(G),E(G)\cup \bigcup_{v\in N} E(T_v)$ and $T'_{OPT}$ be an optimal steiner 
tree for $S$ in $G'$. Clearly, we have $c(T_{OPT})\geq c(T'_{OPT})$ since adding edges to a graph
cannot increase the cost of an optimal steiner tree. By definition of $T_{D\&W}$, we obtain 
$c(T'_{OPT})=c(T_{D\&W})+\sum_{v\in N}c(T_v)$. Hence, we can relate the cost of $T_{\delta}$
with the cost of $T'_{OPT}$ in order to analyze the approximation ratio $R$ of the algorithm:
\begin{eqnarray*}
R & \leq &
\frac{c(T_{\delta})}{c(T'_{OPT})}
 \leq 
\frac{\sum_{v\in N}(|S(v)|-1)+c(T_{D\&W})  }{\sum_{v\in N}(|S(v)\backslash \{v\}|-1)+c(T_{D\&W})} \\
& \leq & 1+\frac{|N|}{\sum_{v\in N}(|S(v)\backslash \{v\}|-1)+c(T_{D\&W})} \\
& \leq &
1+\frac{|N|}{|S|}<1+  \frac{\ln\left(\frac{|S|}{s(\delta,\psi(n))}\right)2\cdot\psi(n)}{|S|}
\leq 1+  \frac{\ln\left(\frac{|S|}{s(\delta,\psi(n))}\right)\delta}{\frac{|S|}{s(\delta,\psi(n))}}
\leq 1+\delta \\
\end{eqnarray*}

\noindent The running time of the Dreyfus-Wagner algorithm~\cite{DW71}~is~$O(3^{|S|}n\!+\!2^{|S|}n^2\!+\!n^3)$.
Hence, the running time of the algorithm $\mathcal{M}_{DSTP}$ is dominated by $n^{O(1)}2^{O(s(\delta,\psi(n)))}=n^{O(1)}2^{O(\frac{\psi(n)}{\delta})}$.
\end{proof}

\subsection{Inapproximability Results}
We prove now the following approximation hardness result for the Steiner Tree problem
restricted to graphs where the degree of every node is  at least $|V\backslash S|^{1-\delta}$.

\begin{theorem}
\label{thm:stplb}
For every $\delta,\epsilon>0$, the Steiner Tree problem restricted to 
$|V\backslash S|^{1-\delta}$-everywhere dense
graphs is APX-hard and NP-hard to approximate with $\frac{263}{262}-\epsilon$.
\end{theorem}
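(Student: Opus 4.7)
My plan is to prove the hardness via a gap-preserving reduction from a known inapproximability result for the standard Steiner Tree problem on graphs. Concretely, I would start from the $\tfrac{96}{95}$-hardness of Chlebík and Chlebíková~\cite{CC08}: from a hard Steiner Tree instance $(G,S)$ of size $n$ with the gap promise $\mathrm{OPT}(G,S)\in\{A,\alpha A\}$ (for $\alpha=96/95$), I would construct in polynomial time a new instance $(G',S')$ of size $\mathrm{poly}(n,1/\delta)$ that satisfies the $|V'\setminus S'|^{1-\delta}$ density bound at every terminal, while losing only a controlled factor in the inapproximability gap. The APX-hardness part will be immediate once such a gap-preserving reduction exists.

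The construction I have in mind is a "private buffer" padding. For each terminal $t\in S$, attach a set $U_t$ of $M$ auxiliary non-terminals, where $M$ is chosen so that the density inequality
\[
M\;\ge\;\bigl(|V\setminus S|+|S|\cdot M\bigr)^{1-\delta}
\]
holds; taking $M=|S|^{c/\delta}$ for a suitable constant $c$ suffices and keeps the size polynomial. The buffers $U_t$ are wired to $t$ (and possibly to a very restricted local gadget) so that every terminal of $G'$ has at least $M\ge |V'\setminus S'|^{1-\delta}$ non-terminal neighbors. In the cleanest variant, the vertices of $U_t$ are pendants of $t$, which can never serve as internal Steiner nodes; this makes $\mathrm{OPT}(G',S')=\mathrm{OPT}(G,S)+B$ where $B$ is the fixed "mandatory" cost of the padding. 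The new gap is then $(\alpha A+B)/(A+B)$, and the explicit constant $\tfrac{263}{262}$ is obtained by tuning $A/B$ against $\alpha$ in the reduction.

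I expect the main obstacle to be designing the density-enforcing padding so that it does \emph{not} open cheap Steiner shortcuts between terminals, which would collapse the gap entirely. A naive construction that connects every terminal of $S$ to a shared set of hub non-terminals would allow a Steiner tree to connect all of $S$ through a single hub at cost $|S|\le \mathrm{OPT}(G,S)+1$, essentially trivializing the problem. The technical work therefore lies in enforcing that routing through the new non-terminals is at least as expensive as routing in the original graph $G$ (e.g.\ by keeping the buffers pendant, or by subdividing/forbidding cross-buffer shortcuts), and then optimizing the size and structure of the gadget so that the additive padding cost $B$ yields the claimed ratio $\tfrac{263}{262}-\varepsilon$. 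Once this is in place, the APX-hardness and the explicit constant both follow mechanically from the base hardness of Chlebík–Chlebíková.
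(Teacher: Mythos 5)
Your construction does not land in the class the theorem is about, and the one step that carries all the difficulty is exactly the step you leave open. In this paper ``$|V\setminus S|^{1-\delta}$-everywhere dense'' means (see the sentence introducing the theorem) that \emph{every node} has degree at least $|V\setminus S|^{1-\delta}$. Your padding only touches terminals: the pendant buffer vertices you add have degree $1$, and the non-terminals of the base instance keep whatever small degrees they had, so the padded instance is simply not everywhere dense. You acknowledge that the real issue is ``enforcing density without opening cheap shortcuts,'' but you defer it rather than solve it -- and that is the entire content of the proof. Two further soft spots: the claim that the constant $\tfrac{263}{262}$ ``follows mechanically by tuning $A/B$'' is not coherent as stated (with truly pendant buffers the padding is never used, $B=0$, and no tuning arises; with interconnected buffers the additive-cost/shortcut analysis is precisely what is missing), and the $\tfrac{96}{95}$ bound of \cite{CC08} you invoke as a black box concerns the general (edge-weighted) Steiner Tree problem, whereas here the objective is the unweighted minimum-size tree, so you would additionally need the gap instances in cardinality form.

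The paper's proof takes a different route that builds the needed structure in from the start. It does not use \cite{CC08} at all: it starts from the Berman--Karpinski gap version of Vertex Cover on degree-$4$ graphs ($104n$ vertices, cover below $(54+\epsilon)n$ versus above $(55-\epsilon)n$) \cite{BK03}, applies the Bern--Plassmann reduction \cite{BP89} (one degree-$2$ terminal per edge, a clique on the original vertices), which yields an unweighted Steiner gap of $(262+\epsilon)n-1$ versus $(263-\epsilon)n-1$ -- this is where $\tfrac{263}{262}$ comes from -- and then densifies by replacing each non-terminal by $k=m^{(1-\delta)/\delta}$ copies while keeping a clique on the union of all copies. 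Since an optimal tree never needs more than one copy per original non-terminal, the optimum is unchanged, and now every vertex, terminal or not, has degree at least $|V\setminus S|^{1-\delta}$; the clique on the non-terminals inherited from Bern--Plassmann is exactly the shortcut-free densification device your proposal lacks. Your private-buffer idea could be repaired in this spirit (attach a private \emph{clique} buffer of polynomial size to \emph{every} vertex; each buffer meets the rest of the graph in a single vertex, so pruning shows the optimum is unchanged, and all degrees become large), which would even suggest a stronger constant than $\tfrac{263}{262}$ provided the base gap instances are unweighted -- but that is a different construction from the one you wrote down, and as submitted the proposal has a genuine gap.
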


\begin{proof}
Berman and Karpinski \cite{BK03} obtained explicit inapproximability results for the Vertex Cover problem
in graphs with bounded maximum degree. We will combine this explicit lower bound with an approximation 
preserving reduction from the Vertex Cover problem restricted to bounded degree graphs to the Steiner
Tree problem. In particular, we will create a special instance of the Steiner Tree problem where
the degree of every terminal is exactly $2$. It will be the starting point of our densification process
to obtain explicit hardness results on graphs with high degree.\\

First, we describe the approximation preserving reduction due to Bern and Plassman \cite{BP89}:
Given a graph $G=(V,E)$ with maximum degree $B$ as an instance of the Vertex Cover problem,
we construct the graph $G_{ST}=(V_{ST},E_{ST})$ and the terminal set $S_{ST}\subseteq V_{ST}$.

For every edge $e\in E$, we introduce a vertex $v_e$ and join it with the two vertices
$w,x\in e$ in $G_{ST}$. Hence, we get $V_{ST}=V\cup \{v_e\mid e\in E\}$. Finally,
we connect all $x,y\in V$ in $G_{ST}$ and define $S_{ST}:=\{v_e\mid e\in E\}$. 
Clearly, the degree of every terminal is $2$ and the length of an optimal Steiner Tree is 
$|E|+|VC_{OPT}|-1\leq \frac{B}{2}|V|+|VC_{OPT}|-1$ where $VC_{OPT}$ denotes a minimum
vertex cover in $G$. We are ready to deduce the explicit lower bound for the special instance
of the Steiner Tree problem. 

The following hardness result for bounded degree version of the 
Vertex Cover problem is due to Berman and Karpinski:
\begin{theorem}[Berman and Karpinski~\cite{BK03}]
The Vertex Cover problem is NP-hard to approximate with $\frac{55}{54}-\epsilon$ in graphs $G$ 
with maximum degree $\Delta_G=4$.
\end{theorem}
More precisely, they proved that for every $\epsilon\in (0,1/2)$, it is NP-hard to decide whether an instance
of the Vertex Cover problem restricted to graphs with maximum degree $\Delta_G=4$ with $104n$ nodes 
has a vertex cover of size below $(54+\epsilon)n$ or above $(55-\epsilon)n$.\\   

Combining these results with the Bern-Plassman reduction, we obtain the following:
The Berman-Karpinski graph for the Vertex Cover problem restricted to graphs with maximum 
degree $\Delta_G=4$ has at most $4\cdot 104n/2=208n$ edges. Therefore, it is
is NP-hard to decide whether $smt\leq 208n+(54+\epsilon)n-1=(262+\epsilon)n-1$ or 
$smt\geq 208n + (55-\epsilon)n-1=(263-\epsilon)n-1$. Hence, the described special instance of the Steiner Tree
problem is NP-hard to approximate with approximation ratio $\frac{263}{262}-\epsilon$.

Now, we need to increase the degree of every terminal. In order to obtain an instance with high vertex degree,  
we associate with every non-terminal $n\in V_{ST}\backslash S_{ST}$
a set $U_n$ of $k$ new vertices and introduce $U:=\bigcup_{n\in V_{ST}\backslash S_{ST}} U_n$. We construct the 
graph $D=(V(D),E(D))$ with $V(D)=U\cup S_{ST}$ and $E(D):=\{\{s,u\}\mid u\in U_x, s\in S_{ST},
\{s,x\}\in E_{ST}\}\cup {U \choose 2} $. The terminal set of this instance remains the same $S_{D}:=S_{ST}$.
To determine the vertex degree of a terminal, we define $m:=|V_{ST}\backslash S_D|$.
Furthermore, we choose $k:=m^{\frac{1-\delta}{\delta}}$. 
Notice that the degree
of every $s\in S$ is  at least 
$2k=2m^{\frac{1-\delta}{\delta}}=2(m^{\frac{1-\delta}{\delta}}\cdot m)^{1-\delta}=2|U|^{1-\delta}=2|V(D)\backslash S_D|^{1-\delta}$.
Next, we analyze how the optimal solutions for $(G_{ST},S_{ST})$ and $(D,S_{D})$ can be related to
each other. Let $OPT_{ST}$ [$OPT_D$] be an optimal solution for $(G_{ST},S_{ST})$ [$(D,S_{D})$].
Since it does not improve the cost of a feasible solution for $(D,S_{D})$ to have more than one vertex of every 
$U_x$ with $x\in V_{ST}\backslash S_{ST}$, we have to consider only feasible solutions 
$T=(V_{T},E_T)$ with $|U_x\cap V_T |\leq 1$ for every $x\in V_{ST}\backslash S_{ST}$. This implies
a canonical and cost preserving reduction. In particular, we have $|OPT_{ST}|=|OPT_D|$. 
All in all, we obtain for every $\epsilon>0$ the same inapproximability factor $\frac{263}{262}-\epsilon$  for instances where every vertex has degree at least $|V(D)\backslash S_D|^{1-\delta}$.
\end{proof}

\section{Summary of Results}
Tables~\ref{tab:res:upp}--\ref{tab:res:troffs} give now a summary of our main results.

\begin{table}[h]
\centering
\begin{tabular}{||c|c|c||}
  \hline \hline
  & & \\
  \vcent{\textbf{Problem}}  &
  \vcent{\textbf{Upper Bound}} & 
  \vcent{\textbf{Approx. Hardness}} \\
  \hline
  & & \\
  \vcent{VC} &
  \vcent{$\frac{2}{1+\bar{d}/(2\Delta)}$} &
  \vcent{$\frac{2}{1+\bar{d}/(2\Delta)}-\delta$  (UGC) \cite{II05}}  \\
  \hline                
  & & \\
  \vcent{CVC} &
  \vcent{$\frac{2}{1+\bar{d}/(2\Delta)}$} &
  \vcent{$\frac{2}{1+\bar{d}/(2\Delta)}-\delta$  (UGC) \cite{II05}} \\
  \hline
  & & \\
  \vcent{STP} &
  \vcent{PTAS} &
  \vcent{$-$} \\
  \hline
  \hline
\end{tabular}
\caption{Table of subdense results\label{tab:res:upp}}
\end{table}

\begin{table}[h]
\centering
\begin{tabular}{||c|c|c|c||}
  \hline \hline
  & & &\\
  \vcent{\textbf{Problem}}  &
  \vcent{\textbf{Upper Bound}} &
  \vcent{\textbf{Approx. Hardness}} &
  \vcent{\textbf{Run Time}} \\
  \hline
  & & &\\[1ex]
  \vcent{CVC} &
  \vcent{$\frac{2}{1+\bar{d}/(2\Delta)}$} &
  \vcent{$\frac{2}{1+\bar{d}/(2\Delta)}-\delta$  (UGC) \cite{II05}}  &
  \vcent{quasipoly time} \\
  \hline
  & & & \\
  \vcent{STP} &
  \vcent{PTAS} &
  \vcent{$\frac{2}{1+\bar{d}/(2\Delta)}-\delta$  (UGC) \cite{II05}}  &
  \vcent{quasipoly time} \\
  \hline
  \hline
\end{tabular}
\caption{Table of mildly sparse approximability results }
\end{table}

\begin{table}[h]
\centering
\begin{tabular}{||c|c|c||}
  \hline
  \hline
  & &\\
  \vcent{\textbf{Problem}} &
  \vcent{\textbf{Approx. Ratio}} &
  \vcent{\textbf{Run Time}} \\
  \hline
  & & \\
  \vcent{VC} &
  \vcent{$\frac{2}{1+\bar{d}/(2\Delta)}$} &
  \vcentmath{$n^{O(1)}2^{O\left(\psi(n)\ln\ln\ln n\right)}$} \\
  \hline                
  & & \\
  \vcent{CVC} &
  \vcent{$\frac{2}{1+\bar{d}/(2\Delta)}$} &
  \vcentmath{$n^{O(1)}2^{O\left(\psi(n)\ln\ln\ln n\right)}$} \\
  \hline
  & & \\
  \vcent{STP} &
  \vcent{$1+\delta$}  & 
  \vcentmath{$n^{O(1)}2^{O(\frac{\psi(s)}{\delta})}$} \\
  \hline
  & & \\
  \vcent{SC} &
  \vcent{exact} &
  \vcentmath{$O(m^{\psi(n)\ln n})$}   \\  
  \hline                 
  \hline                 
\end{tabular}
\caption{Table of known tradeoffs\label{tab:res:troffs}}
\end{table}


\section{Conclusion}

We have established several approximability results for the subdense
instances of Covering and Steiner Tree problems, proving in some cases 
optimality of the approximation ratios under standard complexity-theoretic
assumptions. A very interesting open problem still remains the status
of the dense (and of course also of the subdense) Steiner Forest problem. Our methods 
do not apply to this problem directly, and some new techniques seem to be
necessary.


\end{document}